\DeclareMathOperator*{\argmin}{arg\,min}
\newtheorem{thm}{Theorem}
\newtheorem{lem}[thm]{Lemma}
\newtheorem{rem}[thm]{Remarks}
\def\vol{\mbox{vol}}
\def\RR{{\mathbb R}}
\def\R{{\mathbb R}}
\def\Sd{{\mathcal S_{d-1}}}
\def\cK{\mathcal{K}}
\newcommand{\volesti}{\href{https://github.com/GeomScale/volume_approximation}{\textcolor{blue}{\texttt{volesti}}}\xspace}
\newcommand{\eigen}{\href{http://eigen.tuxfamily.org}{\textcolor{blue}{\texttt{eigen}}}\xspace}
\newcommand{\boost}{\href{http://boost.com}{\textcolor{blue}{\texttt{boost}}}\xspace}
\newcommand{\logconcdead}{\href{https://CRAN.R-project.org/package=LogConcDEAD}{\textcolor{blue}{\texttt{LogConcDEAD}}}\xspace}
\begin{document}
\date{\quad}
\title{Randomized geometric tools for anomaly detection in stock markets}
\author[1]{Cyril Bachelard}
\author[2]{Apostolos Chalkis}
\author[3,2]{Vissarion Fisikopoulos}
\author[4,2]{\\Elias Tsigaridas}

\affil[1]{Faculty of Business and Economics (HEC)\\ Department of Operations\\ University of Lausanne, Switzerland}
\affil[2]{GeomScale org}
\affil[3]{National \& Kapodistrian University of Athens, Greece}
\affil[4]{Inria Paris and  IMJ-PRG\\ Sorbonne Universit\'e and Paris Universit\'e}

\maketitle

\begin{abstract}
We propose novel randomized geometric tools to detect 
low-volatility anomalies in stock markets; a principal problem in financial economics.
Our modeling of the (detection) problem results in sampling 
and estimating the (relative) volume of geodesically non-convex and non-connected spherical patches
that arise by intersecting a non-standard simplex with a sphere.
To sample, we introduce two novel Markov Chain Monte Carlo (MCMC) algorithms that exploit the geometry of the problem and employ state-of-the-art continuous geometric random walks (such as Billiard walk and Hit-and-Run) adapted on spherical patches.
To our knowledge, this is the first geometric formulation and MCMC-based analysis of the volatility puzzle in stock markets.
We have implemented our algorithms in C++ (along with an R interface)
and we illustrate the power of our approach by performing extensive experiments on real data. 
Our analyses provide accurate detection and new insights into the distribution of portfolios’ performance characteristics. Moreover, we use our tools to show that classical methods for low-volatility anomaly detection in finance form bad proxies that could lead to misleading or inaccurate results. 
\end{abstract}

\newpage
\section{Introduction}
\label{sec:intro} 

We consider two fundamental problems from two arguably distant disciplines: Computational Geometry and Financial Economics. The geometrical problem involves the computation of volume of and sampling from non-convex and possibly disconnected spherical patches arising by the intersection of a non-standard simplex with a sphere. The absence of convexity and the presence of many connected components make the problem very challenging from an algorithmic and implementation point of view. The geometric representation has a financial interpretation, which is the set of portfolios, i.e., investments in a collection of stocks, having a certain risk level. Thus, our motivation to solve this geometrically hard problem, aside from having an interest in its own right, stems from a concrete financial question about the feasible space of investable portfolios obeying certain regulatory and risk related constraints. The fundamental Economic question that we eventually address is that of the relation between risk and return: do assets with higher risk levels provide higher returns, as suggested by economic theory, or is there an empirically observable anomaly? The question is a long-standing one and we give a glimpse into its history in the next section. 

Our analysis and contribution to the topic consist of a differentiated perspective from the existing literature. The usual procedure is to form a small number of portfolios by first sorting stocks according to their historical volatility and grouping them into portfolios (typically equally weighted) with increasing riskiness in order to then assess their relative empirical performance. Instead of the simple heuristic approach to group stocks into risk buckets, we consider all possible combinations of stocks from a given investable universe, i.e., the entire space of feasible portfolios having a certain risk level (defined in terms of volatility). 
To achieve this goal we use sampling based on geometric random walks, i.e., Markov Chain Monte Carlo (MCMC) procedures, and compare their resulting empirical risk-return characteristic to those of portfolios sampled from other risk level sets. 
Hence, instead of comparing the descriptive performance statistics of a single portfolio to represent a risk level, we investigate the joint distribution of risk and return of the parameters of the average portfolio having a certain risk level. 
Analyzing two different stock markets (the USA and Europe) and subsets of the two markets distinguishing between company size (small versus large) and sector belongingness (cyclical versus defensive) we find that the risk-return profiles of commonly employed equally weighted quintile portfolios formed by sorting stocks according to their historical volatility range anywhere in the bivariate distribution of realized risk and return. 
As such, we argue that one should be careful to draw conclusions drawn from the sorting-based point estimates. 
However, the analysis of sets of portfolios with given risk level, thus abstracting from the problem of a specific weighting scheme and the missing consideration of correlation structures among stocks, shows that the resulting risk-return cluster does indeed support the hypothesis of an anomaly, albeit less pronounced than when using the quintile approach. The sampling-based approach allows us to visualize how the distribution of risk and return widens with increasing ex-ante volatility. Having knowledge about the distribution of performance statistics is then valuable for statistical inference and significance testing, particularly in the context of financial data which display time-series structures and are non-Gaussian.


\subsection{Financial background}

It has long been recognized that the capital asset pricing model (CAPM), a cornerstone of financial economic theory and the workhorse model of classical capital market theory, independently developed by \cite{bib:Sharpe1964} and \cite{bib:Lintner1965} and \cite{bib:Mossin1966}, does not do justice in explaining the complexity of real world market dynamics. Contrary to the equilibrium model’s predicted simple positive linear relation between risk and expected return, higher risk is not generally rewarded with higher return in global stock markets. According to the CAPM, the return one should expect from an investment depends solely on the riskiness of the investment relative to a single factor which is the overall market. Investments which bear higher risk than the market portfolio should pay out a higher return in expectation, i.e., a risk premia.  However, Haugen and Heins \cite{bib:HaugenHeins1975} were the first to recognize that risk does not generate a special reward following the early warning signs from \cite{bib:BlackJensenScholes1972}, \cite{bib:MillerScholes1972} and \cite{bib:FamaMacBeth1973}. Their finding has subsequently been confirmed by \cite{bib:FamaFrench1992} and \cite{bib:Black1993}.

Further studies have found a wealth of anomalies, i.e., systematic and persistent deviations of empirical observations from model prediction. Prominent examples include firm size \cite{bib:Banz1981} (stocks with lower market capitalization tend to outperform stocks with a higher market capitalization in the future), value \cite{bib:RosenbergReidLanstein1985} (stocks that have a low price relative to their fundamental value, commonly tracked via accounting ratios like price to book or price to earnings outperform high-value stocks), momentum \cite{bib:JegadeeshTitman1993} (stocks that have outperformed in the past tend to exhibit strong returns going forward) or quality \cite{bib:AsnessFrazziniPedersen2019} (stocks which have low debt, stable earnings, consistent asset growth, and strong corporate governance, commonly identified using metrics like return to equity, debt to equity, and earnings variability).

Building upon the anomalous findings, the original single-factor CAPM has then been augmented with other factors besides the market, namely size and value \cite{bib:FamaFrench1992}, size value and momentum \cite{bib:Carhart1997} and size, value, and two quality factors \cite{bib:FamaFrench2015}. Nevertheless, despite the wealth of documented anomalies and cited extensions, the CAPM has shown great resilience vis-à-vis a transition to an alternative paradigm. This is particularly surprising in the light of a large body of literature subsumed under the term low-risk anomaly or low-volatility anomaly which directly attacks the very core of the CAPM by showing that, even after controlling for other factors as done by the latest CAPM extensions, (i) low-risk companies outperform high-risk companies and that (ii) low-risk portfolios produce higher risk-adjusted returns than capitalization weighted benchmarks.

In particular, \cite{bib:Falkenstein1994} found that, when controlling for size, the relation between risk and return gets reversed. Further, the outperformance of low-volatility stocks compared to high-volatility stocks has been shown to be robust among different markets, industries and sub-periods \cite{bib:BlitzvanVliet2007}, \cite{bib:BlitzPangvanVliet2013}, \cite{bib:BakerHaugen2012}, \cite{bib:vanVlietdeKoning2017}, \cite{bib:BlitzvanVlietBaltussen2019}). \cite{bib:Walkshausl2014}. 

Note that in the CAPM, the risk is defined in terms of market beta, i.e., the ratio between an asset’s volatility (standard deviation of returns) and the volatility of a market portfolio times their correlation (or the ratio of the asset’s co-variation with the market portfolio over the variance of the market portfolio). The low-risk anomaly was found to be present irrespective of the employed measure of risk, be it beta or volatility (\cite{bib:BlitzvanVliet2007}, \cite{bib:BakerBradleyWurgler2011}, \cite{bib:FrazziniPedersen2014}, \cite{bib:LiuStambaughYuan2018}, \cite{bib:AsnessEtAl2020}). 

The typical approach pursued by the studies on the low-volatility anomaly is to sort stocks according to their historical volatility and to form portfolios, either weighted equally or proportional to market capitalization, within quantiles of volatility levels. The process is then repeated on a monthly or quarterly basis, thus giving rise to, say, five (quintile) or ten (decile) backtested portfolios of increasing ex-ante volatility. Figure~\ref{fig:risk_return_100y} illustrates the advantageous risk-return profile of the low-volatility portfolio compared to the other volatility quintiles and the overall market computed on nearly a century of U.S. stock market data\footnote{ The investment universe consists of the 1000 largest stocks listed on the three major U.S. exchanges (NYSE, AMEX, and NASDAQ) from 1929 to 2020. Five quintile portfolios, each consisting of 200 stocks, were formed by sorting the investment universe by 36-month historical volatility. The portfolios are rebalanced quarterly and the positions within each quintile are equally weighted (weighting of 0.5\% per share). The data set is provided by Pim van Vliet and Jan de Koning on their private website https://www.paradoxinvesting.com/data/.}.

\begin{figure}[t] 
\centering
\includegraphics[width=0.8\linewidth]{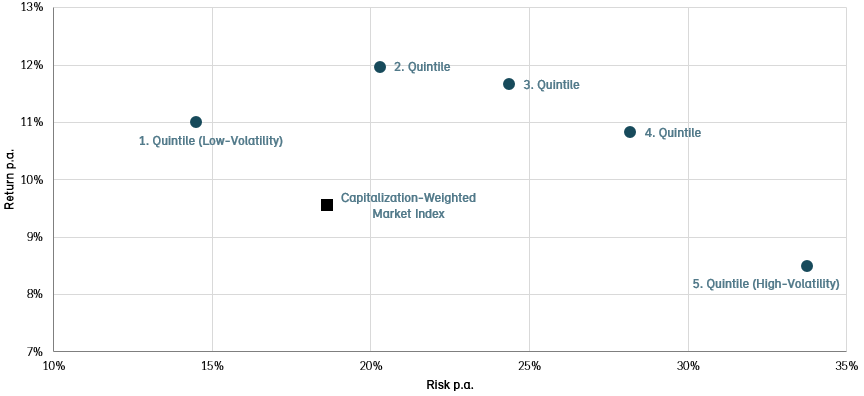}
\caption{Risk-return profile of volatility quintile portfolios and the overall market portfolio for the U.S. stock market over nearly a century.}
 \label{fig:risk_return_100y}
\end{figure}

The performance difference between the lowest and highest volatility stocks is called the low-volatility premia. It can be exploited, in principle, by forming long-short portfolios having positive weights (i.e., a long position) in the low-volatility stocks and negative weights (i.e., a short position meaning that one sells stocks) in the high-volatility stocks. In practice, however, many investors and investment vehicles like mutual funds are prohibited by regulation to short sell assets. Thus, their potential benefit from the anomalous risk-return relation is not to bet on the underperformance of high-volatility stocks, but to overweight low-volatility stocks while avoiding exposure to high-volatility titles.

Alternatively to sorting, the minimum-variance (MV) portfolio provides another solution for a low-risk portfolio by means of optimization. A parallel stream of literature investigating the empirical performance of MV portfolios aligns with the findings on low-risk portfolios based on sorts of single-stock risk measures thus further strengthening the existence of a low-risk anomaly (see \cite{bib:HaugenBaker1991} \cite{bib:ClarkedeSilvaThorley2006}, \cite{bib:ClarkedeSilvaThorley2011}).

Contrary to the pure volatility sorting-based portfolio construction method, the minimum variance procedure includes information about correlations in the portfolio selection process. As a result, the minimum variance portfolio is, although related, different from the sorting-based low-volatility portfolio for it might very well include medium- or even high-volatility stocks as long as they contribute to an overall decrease in portfolio volatility through low correlations. 

Vice-versa, the ex-ante variances of portfolios formed from volatility-sorted subsets of the entire investable universe do not have to be monotonically increasing even though the average ex-ante volatilities of the assets within the groups are. This is, again, because of correlations. 

Hence, the ignorance of correlations in the formation of volatility-ranked portfolios poses a drawback in the existing literature analyzing the volatility puzzle. Instead of clustering stocks according to volatility and representing the subgroups by a single portfolio, an alternative would be to pre-define certain volatility targets and to sample portfolios with exactly those ex-ante volatilities from the entire collection of investable firms. Using sampling, one not only considers correlations but further overcomes the somewhat arbitrary choice of the weighting scheme for the quantile-portfolios (equally or relative to firm size), which is known to have a large impact on performance and inference \cite{bib:PlyakhaUppalVilkov2014}.

\subsection{Connection to geometry, methodology, and computational challenges}

Finance and economic textbooks usually represent the set of feasible portfolios in a risk-return space as in the left plot of Figure~\ref{fig:iso_variance}. The blue parabolic curve, whose concave part is termed the efficient frontier, highlights the boundary of the feasible set (grey area), and the vertical lines correspond to iso-variance portfolios (color-coded by increasing volatility from green to red). 
Despite the fact that we adopt this economic representation to analyze the results of our empirical study of the low-volatility anomaly, our approach consists of representing the set of volatility constrained portfolios with geometrical objects. In particular, we represent the set of long-only portfolios, i.e., portfolios with positive or zero weights, with the canonical simplex. Then, given the covariance matrix of assets' returns, the set of portfolios with a fixed level of volatility is the intersection between the canonical simplex and the boundary of an ellipsoid centered at the origin and whose shape and orientation are determined by the eigenvectors and eigenvalues of that matrix. 
The right plot of Figure~\ref{fig:iso_variance} illustrates the situation for three hypothetical assets and five variance levels (corresponding to the $y-$coordinates of the colored vertical lines in the left plot of Figure ~\ref{fig:iso_variance}), giving rise to five iso-variance ellipsoids intersecting the simplex (grey dotted surface). 
The parabolic efficient frontier in the left plot of Figure~\ref{fig:iso_variance} transforms to a piecewise-linear line in blue in the right plot of Figure~\ref{fig:iso_variance}. In either representations, it highlights the set of mean-variance Pareto-efficient portfolios. Harry Markowitz, the father of modern portfolio theory, called it the critical line\footnote{Note that Markowitz chose a geometric exposition of his ideas.} in his seminal paper \cite{bib:Markowitz1952}.

\begin{figure}[t] 
\centering
\includegraphics[width=1\linewidth]{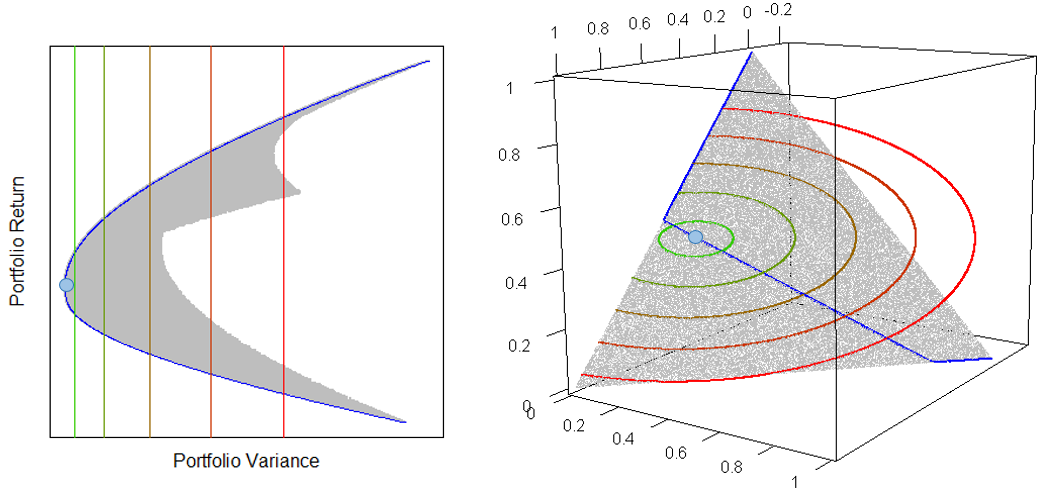}
\caption{Left plot: Feasible set and iso-volatility lines in risk-return space. The grey area represents the set of feasible portfolios. Iso-volatility level sets form vertical lines and are color-coded by increasing variance from green to red. The blue curve highlights the set of Pareto-efficient portfolios in the trade-off between risk and return. The part above the minimum variance portfolio (light blue dot) is called the efficient frontier.\\
Right plot: Feasible set and iso-volatility curves in asset weights space. The grey area depicts the simplex of feasible portfolios. Iso-volatility level sets form ellipsoidal curves, again color-coded by increasing variance from green to red. The blue line, called the critical line, highlights the set of Pareto-efficient portfolios in the trade-off between risk and return. Note that the minimum variance portfolio, again illustrated with a light blue dot, forms the centroid of the $2$-dimensional iso-volatility ellipses and indicates the point where the $3$-dimensional ellipsoid centred at the origin touches the simplex.}
\label{fig:iso_variance}
\end{figure}



In the forthcoming empirical analysis, we estimate the covariance matrix from historical data and set several volatility levels that define a sequence of concentric ellipsoids intersecting the simplex. By sampling independently and uniformly from each intersection, we obtain sets of volatility-constrained portfolios. By computing the future returns for each sample we then capture the dependency between portfolio volatility and future portfolio return using several statistical tools from quantitative analysis. Note that one could directly extend this approach to capture the dependency between volatility and other portfolio scores.

From a geometric point of view, the intersection between the canonical simplex and the boundary of an ellipsoid in $\R^d$ is a $(d-1)$-dimensional (geodesically) non-convex and non-connected body (see Figure~\ref{fig:components_examples}), thus, forming a very challenging problem. 
 An additional challenge, in our case, comparing to existing work on manifold sampling, comes from non-connected-ness i.e.\ we have to estimate the volume of each connected part to achieve uniform sampling. 
 Our approach relies on Markov Chain Monte Carlo (MCMC) sampling and efficient practical Multiphase Monte Carlo (MMC) schemes for volume approximation. 

Considering previous approaches, quantile or MV portfolios correspond to points that belong to the boundary of the set of constrained volatility portfolios. In particular, they lie on the intersection between several facets of the simplex and the boundary of the ellipsoid. Consequently, the region around those portfolios is very unlikely to be sampled, i.e., be visited by the random walk. Thus, it is likely that the analysis of the volatility puzzle with those portfolios to be carried out using outliers in our statistical framework.

High-dimensional sampling from multivariate distributions with MCMC algorithms is a fundamental problem with many applications in science and engineering~\cite{Iyengar88,Somerville98,Genz09,Schellenberger09,venzke2019}. %
In particular, multivariate integration over a convex set and volume approximation of such sets ---a special case of integration--- have accumulated a broad amount of effort over the last decades. 
Nevertheless, these problems are computationally hard 
if we want to solve exactly in general dimension~\cite{Dyer88}.

MCMC sampling algorithms have made remarkable progress in solving efficiently the problems of sampling and volume approximation of full-dimensional convex bodies in $\R^d$ while enjoying great theoretical guarantees~\cite{Chen18,Lee18,Mangoubi19}. 
Sampling from the boundary of a convex body has also been studied~\cite{dieker2014stochastic}. However, these algorithms could not be used in our setting since they are focused either on full-dimensional convex bodies or on sampling from the entire boundary of a convex body instead of a part of it (as in our case).

Manifold sampling is a well-studied problem~\cite{Narayanan06,Diaconis13} with various applications, e.g., in machine learning~\cite{Byrne13}.  Of special interest is the case where the manifold is a hyperpshere~\cite{Davidson18,grattarola19,reisinger10}. Moreover, sampling efficiently on constraint manifolds is a core problem in robotics~\cite{ortiz2021}. 
Finally, in~\cite{Cong17} they propose a method for sampling multivariate normal distributions truncated on the intersection of a set of hyperplanes. 

From a practical perspective, theoretical sampling algorithms cannot be applied efficiently for real-life computations. For example, the asymptotic analysis by~\cite{Lovasz06} hides some large constants in the complexity, and in~\cite{Lee18} the step of the random walk used for sampling is too small to be an efficient choice in practice.   

Recently, practical volume algorithms have been designed by relaxing the theoretical guarantees and applying new algorithmic and statistical techniques;
they are very efficient in practice and they also guarantee high accuracy results~\cite{Emiris14,Cousins15,CoolBod}. 
Volume computation and uniform sampling have been shown to have useful applications in finance for crises detection~\cite{Cales18} and efficient portfolio allocation and analysis~\cite{PST04, HHPS2002}.
In~\cite{Karthekeyan10} they propose a polynomial-time algorithm for the more general problem of sampling and volume computation of star-shaped bodies, an important non-convex generalization of convex bodies.

In~\cite{Abbasi17} they prove that Hit-and-Run mixes fast in a more general setting that includes star-shaped bodies and spiral bodies appearing in motion planning. 
For the more general problem of approximating the volume of basic semi-algebraic sets (that is a set defined by a disjunction of polynomial equalities and inequalities) based on the so-called Moment-SOS hierarchy
we refer the reader to \cite{tacchi2020stokes,tacchi2022exploiting}.

\subsection{Contributions}

We contribute to the literature on the low-volatility anomaly via a novel methodology to generate risk-sorted portfolios in high dimensions. Our approach directly samples uniformly distributed long-only portfolios having a certain level of volatility, which considers correlations and provides insights into the distribution of portfolios' performance statistics. Our empirical application of the geometric approach shows how the distributions of portfolios' performance statistics vary with the ex-ante volatility level and where the performance statistics of the standard sorting-based approach reside: are they close to the modes or outlying? Given that we find sorting-based results which are strong outliers with respect to (w.r.t.) the distribution of sampling-based statistics, we conclude that the classical sorting-based portfolios form bad proxies and one should be careful to base inference on them. 

On the technical side, our contributions consist of the geometric modeling of the financial problem and the construction of efficient randomized geometric tools. First, we apply proper linear transformations to end up operating on the intersection of the unit sphere with the interior of an arbitrary full-dimensional simplex. Then, we sample at the intersection and apply the inverse transformation to obtain volatility-constrained portfolios. We develop two new geometric random walks to sample from such spherical, geodesically, non-convex, and non-connected patches according to any given probability distribution. We also design a new MMC scheme to estimate the volume of a spherical patch. That is a practical randomized method based on simulated annealing and sampling from the Von-Mises Fischer distribution. Our MMC scheme generalizes and extends existing randomized volume approximation schemes~\cite{Cousins14}.
Last but not least, we offer an efficient open-source implementation in {\tt C++} with interface in {\tt R} (see Section~\ref{section:application} for details).

\paragraph*{Structure of the paper}
The next section then describes the geometric modeling of the financial problem in detail as well as the randomized geometric methods that we build, such as MCMC sampling and volume approximation.
The section~\ref{section:application} contains details of our implementation and details about our algorithmic pipeline we introduce to analyze the volatility puzzle. The section~\ref{sec:results} presents the results and the analysis of the volatility anomaly in European and U.S.\ stock markets using our methods. 
Last, in section~\ref{section:discussion} we discuss the possible future work to improve the efficiency of our methods and other financial problems that our framework could be used to address.

\section{Geometric modeling and algorithms}\label{sec:algorithms_modeling}

Our approach consists of the geometric modeling of long-only volatility-constrained portfolios. Within this framework, we develop two sampling algorithms and an MMC scheme for volume approximation. We first present the geometric framework and certain mathematical tools before presenting our algorithms. 

\paragraph*{Notations}

We denote a full dimensional (convex) body with a capital letter and if applicable with an index we denote its dimension; for example, the unit ball in $\R^d$ is  $B^d$. We denote the lower dimensional, usually (geodesically) non-convex, bodies with a calligraphic letter; for example $\mathcal{S}_{d-1}$  is the $(d-1)$ dimensional sphere in $\R^d$. They are usually parts of the surface of a full-dimensional convex body.

\subsection{Geometric modeling}\label{sec:geometric_modeling}

In finance, a portfolio is a collection of assets. 
Each portfolio allocates a percentage of a given budget to every asset. 
Thus, in our setting, the set of long-only portfolios, is the canonical $d$-dimensional simplex 
\begin{equation}
\Delta_d := \{ x\in\RR^{d+1}\ |\ x_i \geq 0,\ \sum_{i=1}^{d+1} x_i = 1 \} \subset \RR^{d+1},
\end{equation}
where each point represents a portfolio and $d+1$ is the number of assets. The vertices represent portfolios composed entirely of a single asset.
The portfolio weights, i.e., the fractions of investment for each asset, are non-negative and sum up to 1. Notice that $\Delta_d$ is a $d$-dimensional body that lies in $\RR^{d+1}$; that is, it is a {\em lower dimensional} body. 

\begin{figure}[!t] 
\centering
\includegraphics[width=0.32\linewidth]{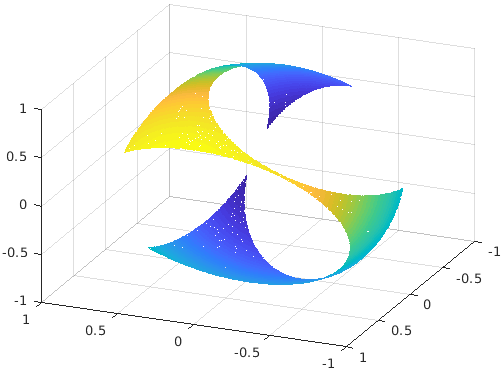}
\includegraphics[width=0.32\linewidth]{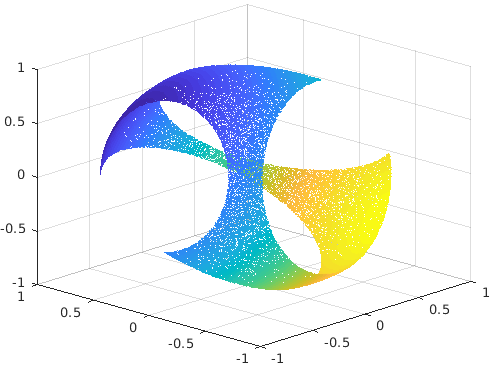}
\includegraphics[width=0.32\linewidth]{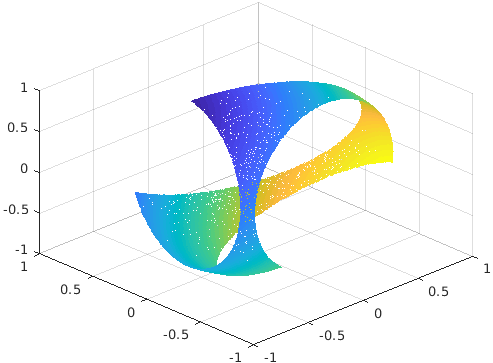}\\
\vspace{0.4cm}
\includegraphics[width=0.32\linewidth]{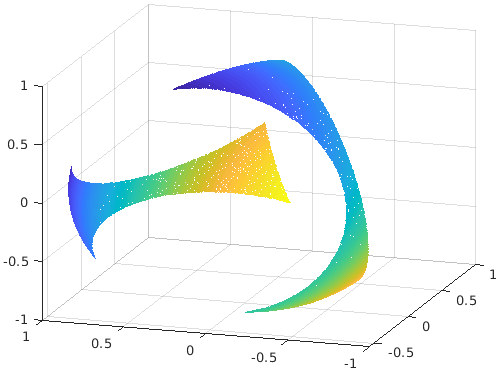}
\includegraphics[width=0.32\linewidth]{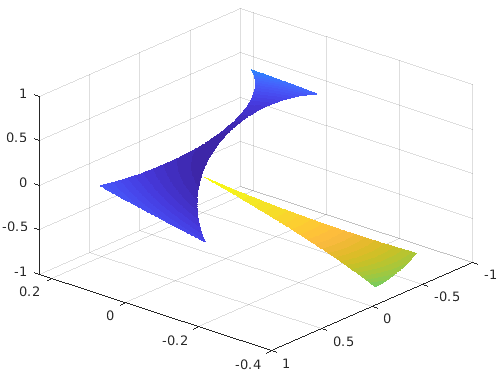}
\includegraphics[width=0.32\linewidth]{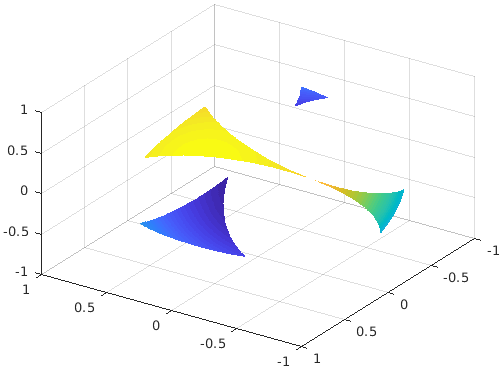}
\caption{Six examples in $\R^3$ of the body $K = \mathcal{S}_2 \cap \Delta$ (the unit sphere intersected by the interior of a simplex) that we sample from. In general, $K$ is a non-connected and geodesically non-convex body in $\R^d$.}
\label{fig:components_examples}
\end{figure}

Given a vector of assets' returns $R\in\RR^{d+1}$, and a portfolio $x\in\Delta_d$, we say that  the return of $x$ is $f_{ret}(x, R) = R^T x$. Similarly, given the positive definite covariance matrix $\Sigma\in\R^{(d+1)\times (d+1)}$ of the distribution of the assets' returns, we define the portfolio volatility as $f_{vol}(x, \Sigma) = x^T\Sigma x$. Thus, to model portfolios' volatility we employ ellipsoids intersecting $\Delta_d$. 
The set,
\begin{equation}
  \mathcal{E}_d^{c} \cap \Delta_d =  \{ x\in\Delta_d\ |\ x^T \Sigma x = c,\ c\in\R_+ \} \subseteq \R^{d+1} ,
\end{equation}
corresponds to the portfolios with volatility $c$. Notice that portfolios, that is the the points, that belong to the (centered at the origin) ellipsoid, $\mathcal{E}_d^{c} := \{ x\in\R^{d + 1}\ |\ x^T \Sigma x = c,\ c\in\R_+ \} \subseteq \R^{d+1}$, achieve volatility equal to $c$. The portfolios in the interior of $\mathcal{E}_d^{c}$ achieve lower volatility, while those in the complement of $\mathcal{E}_d^{c}$ achieve higher volatility than $c$.


Geodesics on ellipsoids (as distinct from spheres) are not, in general, closed. 
That is, while we can compute geodesics on a sphere by exploiting  spherical trigonometry, this is not the case for ellipsoids where one has instead to solve differential equations~\cite{Karney12}. Thus, for efficiency, we map the ellipsoid $\mathcal{E}_d^{c}$ onto the unit hypersphere $\mathcal{S}_{d-1} \subset \R^d$ and we apply the same transformation to the simplex $\Delta_{d}$ to obtain a full-dimensional simplex $\Delta\in\R^d$. 
In particular, first, we use an orthonormal basis that spans the linear subspace that $\Delta_d$ lies on to obtain both an ellipsoid and a simplex in $\R^d$. 
Then, we transform the latter ellipsoid into $\mathcal{S}_{d-1}$ and use this transformation to obtain a simplex $\Delta\in\R^d$.
Therefore, instead of sampling from $\mathcal{E}_d^{c} \cap \Delta_d$,
we sample from 
\begin{equation}\label{eq:body_we_sample}
	\cK := \mathcal{S}_{d-1} \cap \Delta .
\end{equation}
We can use the inverse transformations obtain uniformly distributed portfolios with volatility $c$ since the transformation is isometric, see e.g.,~\cite{chalkis21}.

In general, $\cK$ is a set of geodesically non-convex\footnote{A subset $C$ of a surface is said to be a geodesically convex set if, given any two points in $C$, there is a unique minimizing geodesic contained within $C$ that joins those two points.
A geodesic is a curve representing in some sense the shortest path or arc between two points on a surface.}
spherical surface patches (see Figure~\ref{fig:components_examples} for a few examples).
We call these patches the (connected) components of $\cK$; we denote them by $\cK_i,\ i\in[M]$, where $M$ is their cardinality. 
To sample uniformly from $\cK$, we sample uniformly from each component
according to its relative volume. 
In particular, first we sample $u\sim\mathcal{U}(0,1)$. 
Afterwards, if $u\in [ \sum_{i=1}^mw_i, \sum_{i=1}^{m+1}w_i ]$, for some $m < M$, 
then we sample a uniformly distributed point from $\cK_m$, where $w_i =  {\vol(\cK_i)} /{\vol(\cK)},\ i\in[M]$ is the relative volume of the $i$-th component.



To identify and represent the components of $\cK$ we use the vertices and the edges of $\Delta$. 
In particular, we consider the $1$-skeleton of $\Delta$ i.e.\ the graph whose vertices are the vertices of $\Delta$, with two vertices adjacent if they form the endpoints of an edge of $\Delta$.
Note that in the case of $\Delta$ the $1$-skeleton is a clique. 

We identify the edges of $\Delta$ that $\mathcal{S}_{d-1}$ intersects and remove them from the $1$-skeleton. 
We also identify the vertices of $\Delta$ that lie in the interior of $\mathcal{S}_{d-1}$ and remove them as well as their adjacent edges.
We denote the resulting graph by $G$.
There is a bijection between the connected components of $G$ and the connected components of $\cK$. 
Thus, we represent each connected component of $\cK$ using the set of vertices of the corresponding connected component of $G$. To decide if a given point $p\in\Sd$ belongs to a certain component of $\cK$ we develop a membership oracle. One call costs $O(d^2)$ operations (see Appendix~\ref{appnd:mem_oracle}).


\subsection{Sampling from a connected component of $\cK$}

We introduce two geometric random walks, namely the Great Cycle Walk (GCW) and the Reflective Great Cycle Walk (ReGCW), to sample from a (connected) component of $\cK$. To design these algorithms we employ the geodesics of $\Sd$. 
Note that the great circles on the $\Sd$ are the intersection of the $\Sd$ with $2$-dimensional hyperplanes that pass through the origin in the $\R^d$.
These great circles are the geodesics of the $\Sd$~\cite{Byrne13}.

\subsubsection{Great Cycle Walk (GCW)} 
\label{sec:GCW}

Great Cycle Walk (GCW) is a random walk to sample from any probability density function $\pi(x)$ supported on a connected component of $\cK$.
GCW generalizes the Hit-and-Run sampler~\cite{Smith93} on a spherical patch.
At each step, GCW starts from a point $p\in \cK$ and picks uniformly a great cycle $\ell$ of $S_{d-1}$ passing through $p$. Then, it computes the part of the great cycle that lies in $\cK$ and contains $p$. It samples a point from that part of $\ell$ according to $\pi_{\ell}$ to set the next Markov point, where $\pi_{\ell}$ is the restriction of $\pi$ on $\ell$. 
We prove (Theorem~\ref{thm:GCW}) that the (unique) stationary distribution of this algorithm is $\pi$ and moreover, that GCW converges to $\pi$ from any starting point in $\cK$. 

\begin{algorithm}[t]
  \caption{GCW$(\Delta, p, \pi)$\label{alg:GCW}}
	\label{alg:cgw}
	\SetKwInOut{Input}{Input}
	\SetKwInOut{Output}{Output}
    \SetKwInOut{Require}{Require}
    \SetKwRepeat{Do}{do}{while}

    \Input{Simplex $\Delta$; point $p$; PDF $\pi$.}

    \Require {$\mathcal{S}_{d-1} \cap \Delta \neq \emptyset$; point $p\in \cK=\mathcal{S}_{d-1} \cap \Delta$}

	\Output{Next Markov point in $\cK$}

    \BlankLine
    
    Pick a uniform vector $v$ from $\mathcal{S}_{d-1} \cap \{ x\in\R^d\ |\ p^Tx = 0 \}$\;
    Let the great cycle $\ell(\theta) := \{p\cos\theta + v\sin\theta,\ \theta\in[0,2\pi] \}$\;
    Let $(\theta^-,\theta^+)$ the values s.t.\ $\ell_p(\theta) := \{ p\cos\theta + v\sin\theta,\ \theta\in[\theta^-,\theta^+] \}$ is the part of $\ell(\theta) \cap \Delta$ that contains $p$\;
    Pick $\tilde{\theta}$ from $\ell_p(\theta)$ according to $\pi_{\ell}$\;
    \KwRet $p\cos\tilde{\theta} + v\sin\tilde{\theta}$\;
\end{algorithm}


Some details of the computations are in order. 
GCW (at each step) chooses uniformly a great cycle passing from a point $p$ by sampling uniformly at random a unit vector $v$ from the hypersphere $S_{d-1}$ restricted to the hyperplane $\mathcal{H}_p := \{ x\in \R^d\ |\ p^Tx = 0 \}$. 
The parametric equation of the great cycle is
\begin{equation}\label{eq:gc_parametric_equation}
\ell(\theta) := \{p \cos \theta  + v \sin \theta ,\ \theta \in [-\pi, \pi]\} .
\end{equation}
We compute $v$ by sampling uniformly at random a point $u$ in $\mathcal{S}_{d-1} \cap \mathcal{H}_p$
as follows 
\begin{equation}\label{eq:projection}
v =  \frac {(I_d - pp^T)u} {\| (I_d - pp^T)u \|_2},
\end{equation}
where $I_d$ is the $d \times d$ identity matrix.
In this way, $v$ is the normalized projection of $u$ on the hyperplane $\mathcal{H}_p$ and it is uniformly distributed in $\mathcal{S}_{d-1}$.

GCW computes the connected part of $\ell(\theta)$ that lies in $\Delta$ and contains $p$ by computing the intersection of $\ell(\theta)$ with each facet of $\Delta$.
For this, it computes the smallest positive and the largest negative solution of the following equations,
\begin{equation}\label{eq:bo_cgw_1}
a_j^T \ell(\theta) = b_j \iff a_j^Tx_i \cos\theta + a_j^Tv_i \sin\theta = b_i,\ \theta\in[-\pi, \pi],\ j\in[d+1] ,
\end{equation}
where $a_j\in\R^d$ are normal vectors of the facets of $\Delta$.
Let $z_j = {b_j} /{\sqrt{(a_j^Tx_i)^2 + (a_j^Tv_i)^2}}$. 
If $z_j\in[-1,1]$, then
the values of $\theta$ that correspond to the intersections are
\begin{equation}\label{eq:bo_cgw_2}
\theta^-_j,\ \theta^+_j = \pm\cos^{-1}(z_j) + \tan^{-1}\bigg( \frac {a_j^Tv_i} {a_j^Tp} \bigg) .
\end{equation}
Otherwise, the great cycle does not intersect with the $j$-th facet. 
Thus, GCW keeps $\theta^+ = \min\limits_{j\leq m}\{ \theta^+_j \}$ and $\theta^- = \max\limits_{j\leq m}\{ \theta^-_j \}$ and the intersection points are $\ell(\theta^+)$ and $\ell(\theta^-)$, respectively. 

\begin{thm}\label{thm:GCW}
The stationary distribution of Alg.~\ref{alg:GCW}, CGW($\Delta,p,\pi$), where the starting point $p$ belongs to $K$, is $\pi$ for any starting point $p$ in $K$.
\end{thm}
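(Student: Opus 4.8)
The plan is to verify the standard Markov chain Monte Carlo criteria for convergence: that $\pi$ is a stationary distribution for the GCW kernel, and that the chain is irreducible and aperiodic on $\cK_i$, so that $\pi$ is the \emph{unique} stationary distribution and the chain converges to it from every starting point. The argument closely mirrors the classical proof that Hit-and-Run has the target as its unique stationary distribution (see \cite{Smith93,Abbasi17}), but transported onto a spherical patch via the geodesic (great-circle) structure.

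First I would set up the transition kernel explicitly. From a point $p \in \cK_i$, the walk picks a direction $v$ uniformly on the equator $\mathcal{S}_{d-1} \cap \mathcal{H}_p$, forms the great circle $\ell(\theta) = p\cos\theta + v\sin\theta$, restricts to the connected arc $\ell_p$ of $\ell \cap \Delta$ containing $p$, and draws the next point from $\pi$ restricted and renormalized to that arc. The key structural fact I would establish is a \emph{symmetry / reversibility}-type statement: the great circle through $p$ in direction $v$ is the same geometric object as the great circle through any other point $q$ on that arc (in the appropriate direction), and the arc $\ell_p$ is an intrinsic object — it depends only on the pair $(p, \text{line } \ell)$, not on the order. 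This lets me show the chain is reversible with respect to $\pi$: I would write the density of proposing to move from $p$ to $q$ as an integral over the one-parameter family of great circles containing both $p$ and $q$ (parametrized by the remaining angular degree of freedom), weighted by the uniform measure on directions and the normalized restriction of $\pi$ to each arc, and check that the resulting expression is symmetric in $p \leftrightarrow q$ after accounting for the relevant Jacobian factors on $\Sd$. Reversibility immediately gives stationarity of $\pi$.

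Next I would address irreducibility: for any point $p \in \cK_i$ and any open subset $U \subseteq \cK_i$ of positive $\pi$-measure, the chain reaches $U$ in finitely many steps with positive probability. Since $\cK_i$ is a connected (though geodesically non-convex) open spherical patch, any two points are joined by a path inside $\cK_i$; I would cover such a path by finitely many small geodesically convex "caps," and argue that from any point in one cap the walk has positive probability of landing in the next cap in a single step — because the set of directions $v$ whose great-circle arc $\ell_p$ has nonempty intersection with the next cap has positive measure on the equator, and $\pi$ restricted to that arc assigns positive mass near the overlap. Aperiodicity is then essentially automatic: from any $p$, a positive-measure set of moves stays within any fixed neighborhood of $p$, so the chain is strongly aperiodic. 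Combining reversibility (hence stationarity), irreducibility, and aperiodicity via the standard convergence theorem for Harris chains yields that $\pi$ is the unique stationary distribution and that the law of the chain converges to $\pi$ in total variation from every starting point $p \in \cK_i$.

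The main obstacle I expect is the reversibility computation: making precise the change of variables between "(current point $p$, direction $v$)" coordinates and "(current point $p$, next point $q$)" coordinates on the sphere, and tracking the Jacobian so that the detailed-balance identity comes out symmetric. On a flat space this is the familiar Hit-and-Run computation where the chord-length factors cancel; on $\Sd$ one must instead work with geodesic (angular) distances along great circles and the induced surface measure, and verify that the analogous spherical factors cancel. A secondary, more technical point is handling the non-convexity: the arc $\ell_p$ is only the \emph{connected component} of $\ell \cap \Delta$ through $p$, so I must be careful that the "reachability" relation it induces is still symmetric — i.e., if $q$ lies on the arc through $p$, then $p$ lies on the arc through $q$ along the same great circle — which holds because both are the same connected component of the same set $\ell \cap \Delta$. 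Once these measure-theoretic details are in place, the rest is routine.
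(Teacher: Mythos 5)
Your proposal follows essentially the same route as the paper: reversibility of the GCW kernel with respect to $\pi$ is established by exhibiting a symmetric transition density, obtained by integrating over the family of great circles through the current point and the normalized restriction of $\pi$ to the connected arc, exactly as in the Hit-and-Run argument of Smith that the paper adapts (the paper writes $P(p,A)=\int_A g(p,r)\,\pi(dr)$ with $g(p,r)$ symmetric and concludes by Fubini). The only difference is that you additionally verify irreducibility and aperiodicity to obtain uniqueness and convergence from any starting point, steps which the paper's own proof of this theorem omits (it stops at reversibility implying stationarity), so your version is, if anything, more complete.
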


Finally, $\Delta$ has $d$ facets while the computations of the intersection of $\ell(\theta)$ with a facet of $\Delta$ takes $O(d)$ arithmetic operations, which lead to the following remark.

\begin{rem}
The cost per step of GCW is $O(d^2)$ arithmetic operations.
\end{rem}

GCW can also be used to sample uniformly from a component of $\cK$ (see Appendix~\ref{appnd:gcw_uniform}). However, the next geometric random we introduce, namely Reflective Great Cycle Walk, has superior practical performance compared to GCW for uniform sampling.

\subsubsection{Reflective Great Cycle Walk (ReGCW)}\label{sec:regcw}

\begin{figure}[t] \centering
\includegraphics[width=0.7\linewidth]{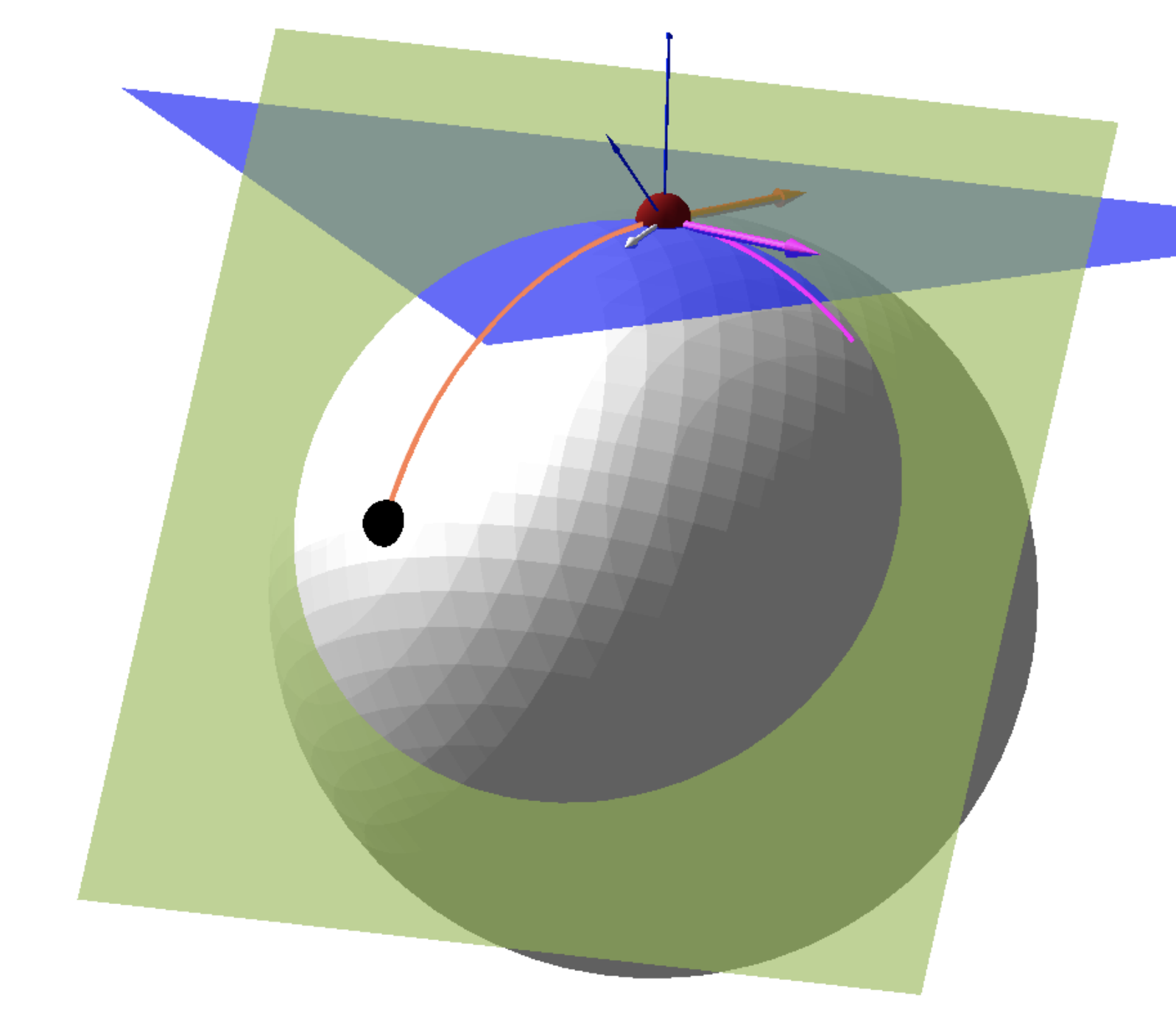}
\caption{An illustration of the reflection of the ReGCW.
The orange trajectory, starting from the black point, hits the boundary of the component defined by the intersection of the sphere with the green hyperplane (red point).
The blue hyperplane is the tangent space at the intersection point. 
The resulted trajectory is  in magenta. 
\label{fig:ReGCW}}
\end{figure}

We introduce the Reflective Great Cycle Walk (ReGCW), a geometric random walk that operates on a connected component of $\cK$ and converges to the uniform distribution. ReGCW is a generalization of Billiard Walk~\cite{Gryazina14} on a spherical patch. 
Similar to GCW it starts from a point $p$ in a connected component $\cK_i$. 
At each step, it generates uniformly a great cycle passing through $p$
and it computes a trajectory length $L=-\tau \ln \eta$, where $\eta$ is
a uniform number in $[0,1]$, i.e., $\eta\sim\mathcal{U}[0,1]$, and $\tau$ is a predefined constant.
When the generated great cycle hits $\partial \Delta$, it reflects and the reflected curve is also a (part of a) great cycle of $\Sd$. 
ReGCW returns the next Markov point when it travels distance $L$; if the number of reflections exceeds a given upper bound $\rho\in\mathcal{N}_+$ then, the next point is $p$ itself. 
It is useful to set a bound on the number of reflections to avoid computationally hard cases where the trajectory may stick in corners. 
We detail our choices for $\tau$ and $\rho$ in Section~\ref{section:application}.

\begin{algorithm}[t]
  \caption{Reflective Great Cycle Walk$(\Delta, p, \rho, \tau)$}
	\label{alg:regcw}
	\SetKwInOut{Input}{Input}
	\SetKwInOut{Output}{Output}
    \SetKwInOut{Require}{Require}
    \SetKwRepeat{Do}{do}{while}

    \Input{Simplex $\Delta\in\R^d$; current Markov point $p$; upper bound on the number of reflections $\rho$; length of trajectory parameter $\tau$;}

    \Require {$\mathcal{S}_{d-1} \cap \Delta \neq \emptyset$; point $p\in \mathcal{S}_{d-1} \cap \Delta$}

	\Output{Next Markov point in the same component of $\mathcal{S}_{d-1} \cap \Delta$ as $p$}

    \BlankLine
       $L \leftarrow -\tau\ln\eta$, $\eta\sim \mathcal{U}(0,1)$ \tcp{length of the trajectory}

       $i\leftarrow 0$ \tcp{current number of reflections}

       $p_0\leftarrow p$ \tcp{initial point of the step}

       Let the hyperplane $\mathcal{H}_p := \{ x\in\R^d\ |\ p^Tx = 0 \}$\;
       Pick a uniform vector $v$ from $\mathcal{S}_{d-1} \cap \mathcal{H}_p$\;


    	\Do{$i\leq \rho$} {
    		Let the curve $\ell(\theta) := \{ p\cos\theta + v\sin\theta,\ \theta\in[0,L] \}$\;
    		$\tilde \theta \leftarrow \argmin\limits_{\theta\in[0,L]}\{ \ell(\theta)\in\partial \Delta \}$ ; \tcp{intersection angle}
    		\lIf{$L < \tilde\theta$ or $\partial \Delta \cap \ell(\theta) = \emptyset$} {
    			\KwRet $p\cos L + v\sin L$ 
    		}
            
    		$p \leftarrow \ell(\tilde\theta)$ and $v \leftarrow \ell'(\tilde \theta)$ ; \tcp{point and direction update}
    		Let $s$ the normalized projection of inner vector of the facet of $\Delta$ at $p$ on $\mathcal{H}_p$\;
    		$v \leftarrow v - 2(v^Ts) s$ \tcp{reflected direction}
    		$L \leftarrow L - \tilde \theta$\;
    		$i \leftarrow i+1$\;
    	}
    	\leIf{$i=\rho$} {
    		\KwRet $p_0$
    	}
    	{
    		\KwRet $p$
    	}

\end{algorithm}

Some details of the computations are in order.
At each step of ReGCW, we are at a point $p$, we denote by $\ell(\theta) := \{ p\cos\theta + v\sin\theta,\ \theta\in[0,L] \}$ the part of the great cycle emanating from $p$ and has length $L$; 
since we are on the unit sphere, geodesic length is numerically equal to the radian measure of the angles that the great circle arcs subtend at the center. 
We compute the smallest angle $\tilde \theta$ as in Equation~(\ref{eq:bo_cgw_1}), hence, $q := \ell(\tilde\theta)\in\partial \Delta$ is the point on the components boundary hit by the ReGCW trajectory. 
If $\tilde \theta < L$, then we compute the reflection of $\ell(\theta)$ at $\tilde \theta$ as follows: Let $a\in\R^d$ the normal vector of the facet of $P$ that $\ell(\theta)$ hits. 
We compute the normalized projection of $a$ onto the hyperplane $\mathcal{H}_q := \{ x\in \R^d\ |\ q^Tx = 0 \}$, say $a'$, using the Equation~(\ref{eq:projection}). 
Then, the reflection of $\ell(\theta)$ at $q$ is,
\begin{equation}\label{eq:reflected_curve}
    \ell_r(\theta) = \{ q\cos\theta + v_r\sin\theta,\ \theta\in [0, L-\tilde \theta] \}, 
\end{equation}
where $v_r = v - 2(v^T\alpha')\alpha'$. 
We also update the length travelled so far, i.e., $L \leftarrow L - \tilde\theta$. 
When $L < \tilde\theta$ we set $\ell(L)$ as the next Markov point. 
The defined reflection operator guarantees that $v_r\in \mathcal{H}_q$ which implies that $\ell_r(\theta)$ is a part of a great cycle.
A single reflection of the ReGCW is depicted in Figure~\ref{fig:ReGCW}.

\begin{thm}\label{thm:gcw_convergence}
The Reflective Great Cycle Walk of Alg.~(\ref{alg:regcw}) has a unique stationary distribution, which is the uniform distribution; it converges from any starting point.
\end{thm}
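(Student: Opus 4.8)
The plan is to follow the structure of the convergence analysis of Billiard Walk~\cite{Gryazina14}, transported to the spherical setting and adjusted for the bounded number of reflections $\rho$. The argument has two essentially independent halves. First, I would show that the uniform distribution $\mu$ on the component $\cK_i$ containing the starting point is stationary for the one-step transition kernel of Alg.~\ref{alg:regcw}. Second, I would show that the chain is $\mu$-irreducible and aperiodic on $\cK_i$, which by the standard ergodic theory of general state-space Markov chains forces $\mu$ to be the \emph{unique} stationary distribution and yields convergence in total variation from every starting point.

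For stationarity, write the kernel as $P(p,A) = \lambda(p)\,\delta_p(A) + \int_{A}\kappa(p,q)\,\mathrm{d}\sigma(q)$, where $\sigma$ is the surface measure on $\cK_i$ (so $\mu = \sigma/\sigma(\cK_i)$), $\lambda(p)$ is the probability that the trajectory issued from $p$ would require more than $\rho$ reflections (in which case the walk resets to $p_0 = p$), and $\kappa(p,q)\ge 0$ is the density of the absolutely continuous part. The heart of the matter is the reversibility identity $\kappa(p,q) = \kappa(q,p)$. It holds because a ReGCW trajectory is a concatenation of great-circle arcs, and traversing the same broken geodesic backwards from $q$, starting in the reversed final direction, returns to $p$ using the same number of reflections and the same total arc length $L$. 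Since $L$ is drawn from $\tau^{-1}\mathrm{e}^{-L/\tau}\,\mathrm{d}L$, which is unchanged under this reversal, and the initial direction is uniform on the tangent circle $\Sd \cap \mathcal{H}_p$, the claim reduces to checking that the endpoint map has unit Jacobian: here one uses that parallel transport of a tangent vector along a great-circle arc is an isometry between the relevant tangent circles, and that each reflection $v \mapsto v - 2(v^T s)s$ is an orthogonal involution of the corresponding $\mathcal{H}_q$, so the composition along the whole trajectory preserves the uniform measure on directions. Granting $\kappa(p,q) = \kappa(q,p)$, stationarity of $\mu$ is immediate from the total-mass identity $\lambda(p) + \int_{\cK_i}\kappa(p,q)\,\mathrm{d}\sigma(q) = 1$: integrating $P(p,A)$ against $\mu$ and using symmetry of $\kappa$ collapses the sum to $\int_{\cK_i} P(p,A)\,\mathrm{d}\mu(p) = \mu(A)$ for every measurable $A \subseteq \cK_i$.

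For uniqueness and convergence, recall that each component $\cK_i$ is, by construction (it corresponds to a connected component of the graph $G$), an open and path-connected spherical patch. I would establish $\mu$-irreducibility by showing that from any $p \in \cK_i$ and any nonempty open $U \subseteq \cK_i$ there is positive probability of landing in $U$ in one step: join $p$ to a point of $U$ by a broken-geodesic (billiard) path inside $\cK_i$ with at most $\rho$ reflections, and observe that the set of initial directions and lengths whose trajectory ends in $U$ has positive Lebesgue measure, because the endpoint map is a local diffeomorphism away from a measure-zero set of degenerate configurations. Combined with the fact that $\kappa$ is continuous and strictly positive, hence bounded below on compact subsets of $\cK_i \times \cK_i$, this shows every compact set is small (petite), which gives uniform ergodicity; aperiodicity follows either from the resetting atom $\lambda(p)\delta_p$ or simply from $\kappa$ being positive on a neighbourhood of the diagonal. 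Standard theorems then give a unique stationary distribution together with geometric total-variation convergence from any start, and the first half identifies it with $\mu$.

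The main obstacle is the reversibility identity $\kappa(p,q) = \kappa(q,p)$. Making it rigorous requires carefully tracking the change of variables through a sequence of reflections on a curved surface, verifying that the Jacobian is exactly $1$, and excluding the measure-zero sets of bad initial data — trajectories grazing a facet tangentially, hitting a face of $\Delta$ of codimension $\ge 2$ or a vertex, or (a subtlety special to the sphere, since great cycles are closed) a great cycle of length exceeding $2\pi$ that never meets $\partial\Delta$ — and checking that these exclusions do not disturb the pairing of forward and reversed trajectories. Once this is settled, the irreducibility and aperiodicity arguments are routine.
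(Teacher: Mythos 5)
Your proposal is correct and follows essentially the same route as the paper's proof, which likewise transports the Billiard Walk analysis of~\cite{Gryazina14} to the sphere: symmetry of the transition density via reversibility of the broken-geodesic trajectory and the reflection law, plus positivity from connectedness, yielding the uniform distribution as the unique stationary distribution (the paper invokes Theorem~2 of~\cite{Smith84} where you invoke general-state-space ergodic theory). Your version is in fact somewhat more careful than the paper's --- you isolate the atom $\lambda(p)\,\delta_p$ created by the reflection cap $\rho$ and flag the unit-Jacobian verification and the measure-zero degeneracies explicitly, points the paper's proof passes over --- but the skeleton is identical.
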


Finally, ReGCW performs at most $\rho$ reflections per step, while the computation of the intersection of $\ell(\theta)$ with $\partial \Delta$ costs $O(d^2)$ arithmetic operations, which leads to the following remark.

\begin{rem}
The cost per step of ReGCW is $O(\rho d^2)$ arithmetic operations.
\end{rem}

\subsection{Practical volume approximation}\label{sec:volume}

At a high level, our method is based on that of~\cite{Lovasz06b, Cousins14} and on the practical variant~\cite{Cousins15}. However, those algorithms are designed for full dimensional bodies in $\R^d$. Thus, we have to make the necessary practical adjustments to develop a practical volume estimation method. Since in our case we estimate the volume of geodesically non-convex spherical patches 
we do not have most of the theoretical guarantees appeared in previous work~\cite{Lovasz06b, Cousins14}. 
In particular, given a connected component $\cK_i,\ i\in[M]$, for any sequence of $k$ functions $f_j:\mathcal{S}_{d-1} \rightarrow \R_+,\ j\in[k]$ we consider the following representation of $\vol(\cK_i)$,
\begin{align}
\label{eq:integral_annealing}
    \vol(\cK_i) & = \int_{\cK_i} f_k dx \, \frac{\int_{\cK_i} f_{k-1} dx}{\int_{\cK_i} f_k dx} \frac{\int_{\cK_i} f_{k-2} dx}{\int_{\cK_i} f_{k-1} dx} \cdots \frac{\int_{\cK_i} dx}{\int_{\cK_i} f_1 dx}  \\
    & = \Bigg( \frac{1}{\int_{\cK_i} f_k dx} \frac{\int_{\cK_i} f_{k} dx}{\int_{\cK_i} f_{k-1} dx} \cdots \frac{\int_{\cK_i} f_1 dx}{\int_{\cK_i} dx} \Bigg)^{-1},
    \quad \text{ for } i\in[M] ,
\end{align}
while we prefer the right hand of the Equation~(\ref{eq:integral_annealing}) for reasons we explain in the sequel. We set each $f_j$ to be proportional to the von-Mises Fischer (vMF) distribution, i.e.,
\begin{equation}
    f_j(x) = e^{a_j(\mu^Tx)},\ x, \mu\in \mathcal{S}_{d-1} ,\ a_j >0,
\end{equation}
where $\mu$ is the mean and $a_j$ is the inverse of the variance. The vMF distribution is  the restriction of the spherical Gaussian distribution on the hypersphere $\mathcal{S}_{d-1}$~\cite{Mardia75}.

In previous work~\cite{Lovasz06b, Cousins14} they use exponential or Gaussian functions in Equation~(\ref{eq:integral_annealing}). They first compute the smallest variance $\alpha_k$ s.t.\ the mass of $f_K$ is almost inside the convex body. Then, they compute the sequence of functions by cooling the variance until they reach the uniform distribution. In our case, we can not have a guarantee for $\alpha_k$, and thus, we start from the uniform distribution and then we decrease the variance until we reach a distribution with sufficiently small variance. To stop decreasing the variance, in each step of our schedule, we sample from $f_j$ and we probabilistically bound the proportion of the mass outside $\cK_i$ exploiting Bernoulli trials.

By standard error analysis~\cite{Jeter05}, to estimate $\vol(\cK_i)$ within relative error $\epsilon$ it suffices to estimate each integral ratio in Equation~(\ref{eq:integral_annealing}) within error $\epsilon_k = O(\epsilon / \sqrt{k})$ while the $\int_{\cK_i} f_kdx$ is computed within an error $\epsilon_0 < \epsilon$. 

To estimate each integral ratio within $\epsilon_k$, we rewrite it as 
\begin{equation}\label{eq:Y_rv}
    Y_j = \frac{\int_{\cK_i} f_j dx}{\int_{\cK_i} f_{j-1} dx} = \int_{\cK_i} \frac{f_j}{f_{j-1}}\frac{f_{j-1}}{\int_{\cK_i} f_{j-1}dx} .
\end{equation}
Then, we use GCW to generate $N$ random samples from a distribution proportional to $f_{j-1}$ and restricted to $\cK_i$ and we estimate each integral ratio using the following estimator
\begin{equation}\label{eq:ratio_estimation}
    R_j = \frac{1}{N} \sum_{l=1}^N \frac{f_j(x_l)}{f_{j-1}(x_l)},
\end{equation}
while $E[Y_j] = \lim\limits_{N\rightarrow \infty} R_j$.
Using Chebyshev's inequality, when $Var[Y_j] / E[Y_j]^2 \leq 1$ we guarantee that $N = \widetilde{O}(1)$ points suffice to approximate~(\ref{eq:Y_rv}) within relative error $\epsilon_k$~\cite{Cousins14}. 
\vspace{0.2cm}\\
\noindent
\textbf{Fixing the sequence.} In~\cite{Cousins14} they prove that when $f_j$ are Gaussian functions in $\R^d$, if $\alpha_j = \alpha_{j-1} \bigg( 1 + \frac{1}{d} \bigg)$, then $Var[Y_j] / E[Y_j]^2 \leq 1$. To the best of our knowledge it is unclear if the Lemma 3.2 in~\cite{Cousins14} can be extended in our framework. 
To fix the sequence of variances we define the following practical annealing schedule based on the practical techniques in~\cite{Cousins15}. We set
\begin{equation}
    \alpha_j = \alpha_{j-1} \bigg( 1 + \frac{1}{d} \bigg) ^r,\ r\in\R_+ .
\end{equation}
Then, we sample $N'$ 
points with GCW from $f_{j-1}$ and we search for the maximum $r$ s.t.\ the ratio of the variance over the square of the average value of $f_j/f_{j-1}$ evaluated on this sample lie in an interval $[1-\delta,1]$ for a predefined small value of $\delta$. To estimate the desired value of $r$ we binary search in an interval $r_{\min}, r_{\max}$, where $r_{\min} = 0$ and $r_{\max}$ is found by setting $r_{\max} = 2^n$ and $n$ the smallest integer s.t.\ the average value of the ratio  $f_j/f_{j-1}$ with $\alpha_j = \alpha_{j-1} \bigg( 1 + \frac{1}{d} \bigg) ^{r_{\max}}$ is larger than $1$. We follow~\cite{Cousins15} to set $N'$ for practical computations, and thus, we choose the value $1200 + d^2$.
\vspace{0.2cm}\\
\noindent
\textbf{First variance.} To compute $\alpha_1$ we set $\alpha_0 = 0$ and we use $N'$ 
uniformly distributed points in $\cK_i$
generated by ReGCW. Then, we binary search for $\alpha_1$ from a proper interval s.t.\ the average ratio between $f_1$ and the uniform distribution is smaller than $1$.
\vspace{0.2cm}\\
\noindent
\textbf{Last variance.} To stop we compute a large enough $\alpha_k$ s.t.\ almost the entire mass of the distribution proportional to $f_k$ and restricted to $\cK_i$ is inside $\cK_i$, i.e., for a predefined $\epsilon_0$ we have,
\begin{equation}
    \int_{\cK_i} f_k dx = (1-\epsilon_0)\int_{\mathcal{S}_{d-1}} f_k dx 
\end{equation}
with high probability. To achieve this objective, when we compute a new $f_j$, we sample $\nu$ points from the corresponding exact vMF distribution supported on $\mathcal{S}_{d-1}$ using the algorithm in~\cite{Kurz15} and we stop when less than $\epsilon_0 \nu$ points are outside $\cK_i$. 
Clearly, from Hoeffding's inequality $\nu=O\bigg(\log\bigg(\frac{1}{1-\zeta}\bigg)\bigg/\epsilon_0^2\bigg)$ points suffices to guarantee that $\int_{\cK_i} f_k dx \geq (1-\epsilon_0)\int_{\mathcal{S}_{d-1}} f_k dx$ with probability $1-\zeta$. 
For practical computations we set $\epsilon_0 = 0.05$.
\vspace{0.2cm}\\
\noindent
\textbf{Ratio convergence.} If the points generated by GCW were independent, then we would use the theoretical bound on $N$, derived from Chebyshev's inequality in~\cite{Cousins14} to estimate each integral ratio. 
However, these samples are correlated and thus, we use the same convergence criterion as in~\cite{Cousins15}. 
In particular, for each point, we generate we update the value of the integral ratio and we store the last values on a sliding window $W$. 
We declare convergence when,
\begin{equation}
    (\max(W) - \min(W)) / \min(W) \leq \epsilon_k/2,
\end{equation}
where $\max(W)$ and $\min(W)$ correspond to the maximum and minimum values of the sliding window respectively. As in~\cite{Cousins15} it is unclear how to obtain a good bound on the probability of failure with relation to the window size. To set the length of the sliding window we follow~\cite{Cousins15} and we set it equal to $4d^2 + 500$.
\vspace{0.2cm}\\
\noindent
\textbf{Sampling from a segment.} To sample from mVF distribution $\pi(x)\propto e^{\alpha (\mu^Tx)}$ restricted to a connected component $\cK_i$ we use GCW algorithm. In each step of the random walk we have to sample from mVF restricted to a curve $\ell(\theta) = \{ p\cos\theta + v\sin\theta, \theta\in[\theta_1, \theta_2] \}$, i.e.,
\begin{equation}
    \pi_{\ell}(\theta) \propto e^{\alpha (\mu^Tp\cos\theta + \mu^Tv\sin\theta)},\ \theta\in[\theta_1, \theta_2] .
\end{equation}
To sample from the latter univariate distribution we use Metropolis-Hastings algorithm~\cite{Chib95} (see section~\ref{section:application}).  

\section{Implementation and experiments}\label{section:application}

We perform an empirical study using previously introduced randomized geometric tools for the detection and analysis of the low-volatility anomaly on real data from global stock markets and comment on the performance of our implementation. 

\subsection{Implementation}
\label{sec:implementation}

We present the implementation of our algorithms and the tuning of
various parameters. 
We provide a complete open-source software framework to address low-volatility detection in stock markets with hundreds of assets. 
Our code lies on a public domain\footnote{\url{https://github.com/TolisChal/volume_approximation/tree/low_volatility}}.
The core of our implementation is in {\tt C++} to optimize performance while the user interface is implemented in {\tt R}. The package employs
\eigen~\cite{eigenweb} for linear algebra, \boost~\cite{boostrandom} for random number generation, and expands \volesti~\cite{chalkis2020volesti}, an open-source library for high dimensional MCMC sampling and volume approximation. In our software we also use the package \logconcdead~\cite{Culejss09} to fit logconcave distributions for the analysis of the samples generated by our random walks (see section~\ref{sec:results}).

\subsubsection{Parameter tuning for practical performance}
To obtain an efficient implementation for our methods we introduce the following parameterizations.
\vspace{0.1cm}\\
\noindent
\textbf{Computing a starting point on $K_i$.} We first compute the maximum inscribed ball in the intersection $\Delta \cap B_d$, where $B_d$ is the unit ball. Since this is the intersection of a convex polytope with a ball, the maximum inscribed ball with center $x_c$ and radius $r$ is given by the optimal solution of the following conic program,
\begin{equation}
    \max\ r,\quad \text{subject to}: a_i^Tx_c + r||a_i|| \leq b_i,\, ||x_c|| \leq 1-r ,
\end{equation}
where $\Delta$ is defined as the intersection of the half-spaces $a_i^Tx \leq b_i,\ i=1,\dots, d+1$. Then, any vertex $v$ of $\Delta$ that also belongs to the connected component of graph $G$ (defined in Section~\ref{sec:geometric_modeling}) 
 corresponds to a component $\cK_i$
 and thus we can use it to obtain a point in $\cK_i$.
 	For this, we consider the segment defined by $v$ and $x_c$, then, the intersection of the segment with $\Sd$ lies in $\cK_i$.
\vspace{0.2cm}\\
\noindent
\textbf{Convergence to the target distribution.} We assess the quality of our results by employing a widely used Markov Chain Monte Carlo diagnostic, namely the potential scale reduction factor (PSRF)~\cite{Gelman92}. In particular, we compute the PSRF for each univariate marginal of the sample that
both GCW and ReGCW output. Following \cite{Gelman92}, and plenty of other works, a convergence is satisfying according
to PSRF when all the marginals have PSRF smaller than $1.1$.
\vspace{0.2cm}\\
\noindent
\textbf{mVF restricted to a segment (GCW).} To sample from mVF using GCW, in each step of the random walk, we have to sample from the mVF constrained on a part of a great cycle. Thus, the goal is to sample from the univariate distribution $\pi_{\ell}(\theta) \propto e^{\alpha (\mu^Tp\cos\theta + \mu^Tv\sin\theta)}$, 
for $\theta\in[\theta_1, \theta_2]$. To perform this operation, we use Metropolis-Hastings~\cite{Chib95}. The proposal probability density we use is the uniform distribution on a segment of length $(\theta_2 - \theta_1) / 3$ with the median being the current Markov point of GCW. We set the walk length (the number of Markov points to burn until storing a point) equal to a fixed value, i.e., $10$. We found that this value is an efficient choice; in our experiments the Markov point of GCW changes after the $95\%$ of the total number of steps we performed in our experiments. Thus, the empirical probability of stacking at a certain point is quite small.
\vspace{0.2cm}\\
\noindent
\textbf{Parameters of ReGCW.} To employ ReGCW (see section~\ref{sec:regcw}), we have to efficiently select values for the
parameter $\tau$ that controls the length of the trajectory in each step, for
the maximum number of reflections per step $\rho$, and for the walk length
of the random walk. We have experimentally found that setting the walk length equal to $1$, is the fastest choice so that the empirical distribution converges to the uniform distribution. To set $\tau$ for a component $\cK_i$ we sample $20d$ points with GCW with uniform target distribution. Then, we set $\tau$ equal to the length of the maximum geodesic chord in Equation~\ref{eq:bo_cgw_2} computing in those $20d$ steps of GCW. 
For the maximum number of reflections, we experimentally found that
$\rho = 100d$ is violated in less than $0.1\%$ of the total number of ReGCW steps in our experiments.

\subsection{Construction of volatility-constrained random portfolios and backtesting framework}
\label{subsec:backtest_framework}

We apply ReGCW sampling to construct sets of portfolios having a predefined variance and investigate whether the out-of-sample performance of so constructed portfolios varies in a systematic way with the variance level (do portfolios with higher variance deliver higher, lower, or equal returns?). Out-of-sample means that we analyze the future (ex-post) performance of portfolios formed with volatility targets derived from past (ex-ante) stock price information and that, at every point of the back-testing procedure, we only use information that was effectively available at that point in time. Starting in March 2002, the earliest possible date for our data set described below, the implementation consists of a three-step process that is applied initially and repeated every three months in order to account for new information about stock risks and index composition. The three steps are data cleaning, covariance estimation, and sampling. 
In total, the quarterly reviews amount to $80$ time points where portfolios are rebuilt by first, cutting out a historical data sample of five years of cleaned weekly returns to estimate the covariance matrix which defines the ellipsoidal portfolio-variance level sets. Our choice of estimator is the non-linear shrinkage estimator \cite{bib:LedoitWolf2020} which has been shown to possess desirable properties in large-dimensional setups and is guaranteed to produce non-singular matrices and thus non-degenerate ellipsoids. Given the covariance matrix, five variance targets are computed from volatility-sorted quintile portfolios with equal weighting of within-quintile assets. From each of the five volatility level sets, $1000$ portfolios are sampled.\footnote{Increasing the number of samples is not a bottleneck computationally (the complexity is linear). However, we found no additional economic value by adding more samples since qualitatively, our results did not change.} The investments are held over the following three months until the process is repeated. Ultimately, we arrive at a total of $5000$ backtested portfolio price paths capturing the profits and losses endured over a period spanning from March 2002 to Dezember 2021 by randomly concatenating time series within each volatility cluster at the 80 rebalancing dates.

The entire procedure is repeated several times in order to control for size and sector effects by distinguishing between the 50\% smallest and largest companies and by further labeling companies as either defensive or cyclical according to the sector classification methodology employed by MSCI\footnote{Defensive sectors: Staples, Utilities, Energy and Health Care. Cyclical sectors: Financials, Real Estate, Information Technology, Discretionary, Industrials, Materials, Communication. For more detailed information see: 
\url{https://www.msci.com/eqb/methodology/meth_docs/MSCI_Cyclical_and_Defensive_Sectors_Indexes_Methodology_Nov18.pdf}} building on the Global Industry Classification Standards (GICS)\footnote{see \url{https://www.msci.com/our-solutions/indexes/gics}}.

\subsubsection{Data}\label{sec:data}

The data basis for our empirical study consists of two large universes of stock price series of companies covered by the MSCI USA and the MSCI Europe indices\footnote{see https://www.msci.com/our-solutions/indexes/developed-markets} which encompass large and mid-cap equities traded in the U.S. and across 15 developed countries in Europe. 

Our estimations are based on discrete weekly total\footnote{Returns, i.e., the percentage changes in prices from time $t-1$ to $t$, are termed \textit{total} when adjusted for dividends (i.e., dividends are re-invested).} returns using Wednesday closing prices denoted in local currencies. We use local currencies since we do not want to model any foreign exchange rates which would add another layer of volatility to the return series when expressed in a particular base currency (except for stocks that are already denominated in that currency). The currency effect is only relevant for the European market.
Out-of-sample simulations are based on discrete daily total returns denoted in U.S. Dollars. Finally, we calculate all descriptive statistics and significance tests on discrete monthly returns as is customary in the financial industry.

The data cleaning process starts with adjustments for past corporate actions such as dividends, mergers and acquisitions, name changes, and other corporate actions. In addition, stocks that do not have enough history are excluded from the sample. To be included in the study, stocks need a consistent price history of five years, the equivalent of 260 weekly returns, without any gaps larger than two weeks. Further, otherwise, illiquid stocks are removed from the investable universe. As a threshold, we require a median trading volume over the previous 365 days to be above USD 1.5 million. We do this because, on the one hand, such illiquid stocks are not easily tradable and therefore would lead to a large implementation shortfall (i.e., the difference between a simulated performance and one obtained from real investments) and on the other hand, such companies display artificially low volatility due to a lack of trading and not because they are not risky.
The cleaning process is necessary to ensure that at every point in time, the investable universe only contains information that was effectively available at that point in time and to avoid any positive survivorship bias. Reference index membership over the full sample period is therefor not a requirement.





\section{Results}\label{sec:results}

Our experimental investigation of the low-volatility effect over the past nearly twenty years using introduced geometric tools allows for a series of interesting conclusions. In particular, the sampling-based method provides insights into the distribution of risk and return statistics as visualized in Figure ~\ref{fig:risk_return_usa_europe}. The charts show the risk-return profiles of the five clusters of backtested sampling-based portfolios for the U.S. (left plot) and European (right plot) markets where clusters are color-coded by increasing variance from green to red. Each point indicates the annualized performance statistics of a backtested strategy using the procedure described in section~\ref{subsec:backtest_framework}. The points per volatility group are overlaid by bivariate (nonparametric kernel) density contours lines. The black dots represent the cluster averages whereas blue dots depict the performance of the classical sorting-based quintile portfolios. The light blue dot further shows the performance of a minimum-variance portfolio backtest and the black square tells the performance of the capitalization-weighted market index.
Inspection of Figure ~\ref{fig:risk_return_usa_europe}, as well as Figure~\ref{fig:risk_return_usa_4plots} and~\ref{fig:risk_return_europe_ex_ch_4plots} which display the same insights for the four sub-markets filtering for large defensive, small defensive, large cyclical, and small cyclical companies, reveals the following results.

\begin{figure}[t] 
\centering
\includegraphics[width=1\linewidth]{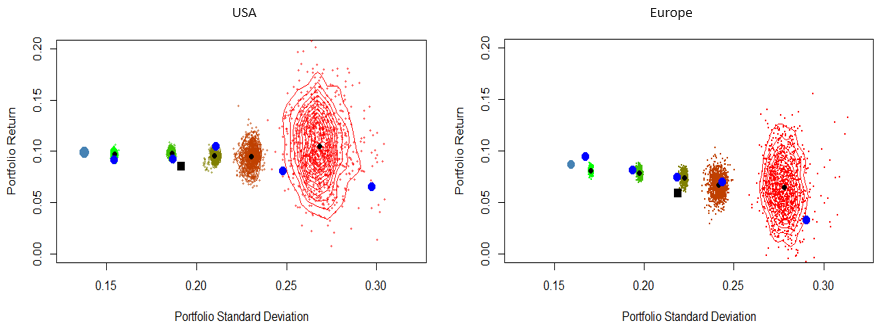}
\caption{Risk-return profiles of the five clusters of backtested sampling-based portfolios for the U.S. (left) and European (right) market where clusters are color-coded by increasing variance from green to red. The black dots represent the cluster averages whereas blue dots depict the performance of the classical sorting-based quintile portfolios. The light blue dot further shows the performance of a minimum-variance portfolio backtest and the black square tells the performance of the capitalization-weighted market index.}
\label{fig:risk_return_usa_europe}
\end{figure}

\begin{figure}[t] 
\centering
\includegraphics[width=1\linewidth]{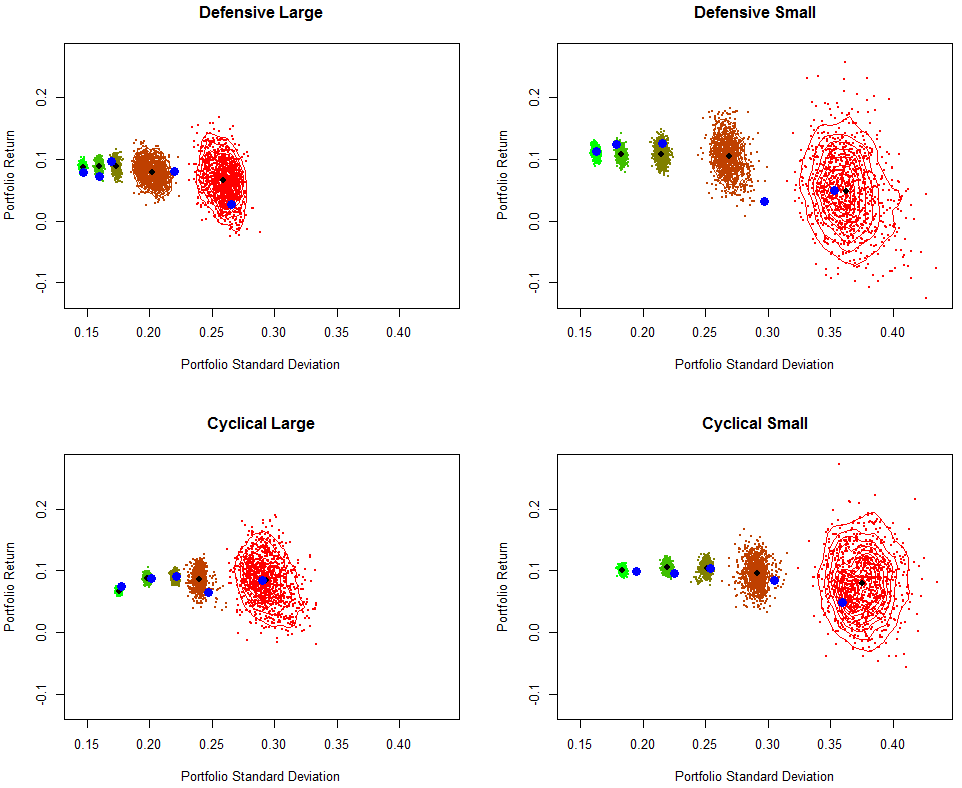}
\caption{Risk-return profiles of the five clusters of backtested sampling-based portfolios for the U.S. market are grouped into four sub-markets according to company size and sector belongingness.}
\label{fig:risk_return_usa_4plots}
\end{figure}

\begin{figure}[t] 
\centering
\includegraphics[width=1\linewidth]{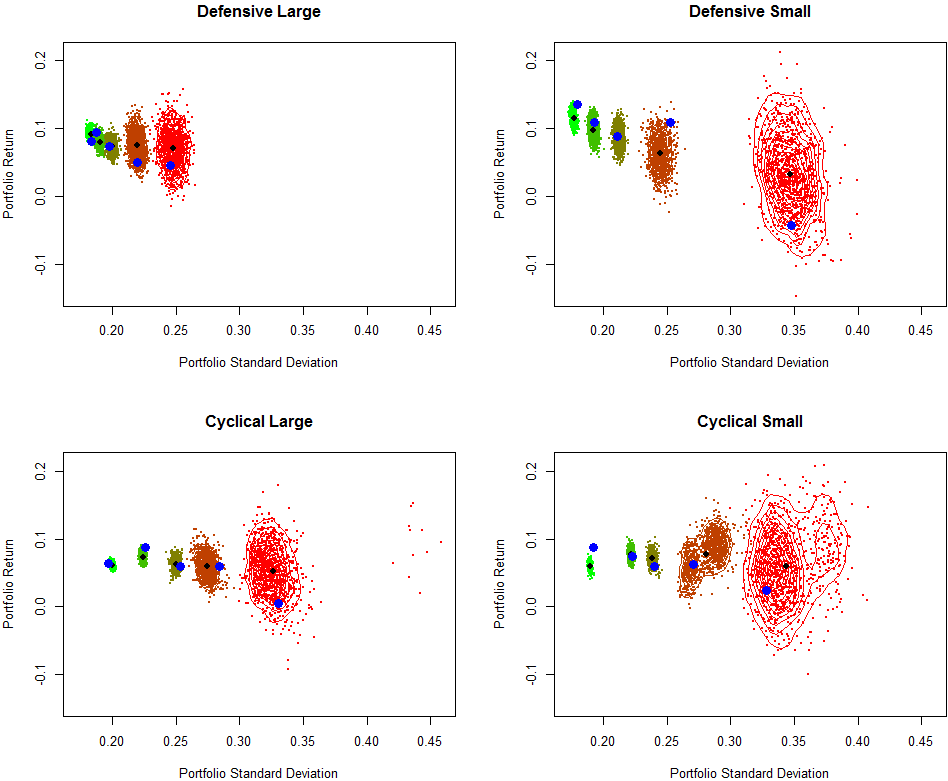}
\caption{Risk-return profiles of the five clusters of backtested sampling-based portfolios for the European market are grouped into four sub-markets according to company size and sector belongingness.}
\label{fig:risk_return_europe_ex_ch_4plots}
\end{figure}

\begin{figure}[t] 
\centering
\includegraphics[width=0.7\linewidth]{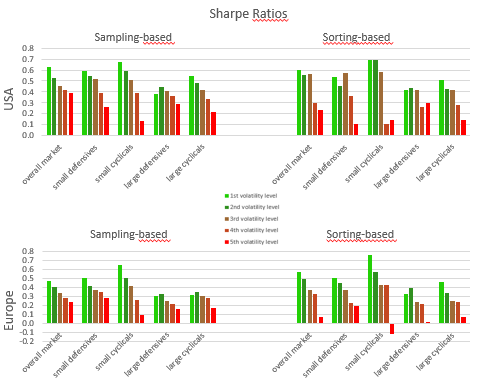}
\caption{Sharpe ratios for the average sampling-based backtests and sorting-based backtests (quintile portfolios) in the U.S. and Europe, for the overall markets and the four sub-universes.}
\label{fig:sharpe_ratios_usa_europe}
\end{figure}


First, we can confirm the presence of the low-volatility effect in both universes and corresponding sub-markets. Low-volatility portfolios historically delivered higher risk-adjusted returns than high-volatility portfolios, on average. However, the anomalous pattern is much more pronounced in Europe where we find a monotone decrease in return with increasing variance, i.e., very much the opposite of what would be predicted by the CAPM. In the U.S. market, the negative relation only holds for risk-adjusted performance measures like the Sharpe ratio (the ratio of average return over volatility). The risk-return pattern is essentially flat with a slight increase in returns for the highest volatility level, an effect that could not be seen in the introductory example (see Figure~\ref{fig:risk_return_100y}) using the long-term U.S. dataset. In fact, the effect is to a large extent attributable to the stellar performance of mega-cap tech firms following the crash of March 2020 induced by the outbreak of the Covid-19 pandemic. If we omit the last two years in the analysis, i.e., stopping backtests at the end of 2019, the effect weakens and one finds a picture close to the one in the introductory example where the average of the highest volatility portfolios has the lowest cumulative returns while the largest cumulative returns are obtained by portfolios with moderately large variance (quintiles $2$ and $3$) and not by the low-volatility portfolio (quintile $1$). 

Detailed performance statistics are reported in Tables ~\ref{tab:annualized_returns_usa} to 
~\ref{tab:annualized_sharpe_ratios_europe}. They give the annualized returns, standard deviations, and Sharpe ratios per (sub-) universe for the five sorting-based quintile portfolio backtests and the average of the corresponding sampling-based portfolio backtests together with performance measures of the minimum variance portfolio and the capitalization-weighted index. Figure~\ref{fig:sharpe_ratios_usa_europe} further visualizes the decrease in Sharpe ratio with increasing variance, which is present in all (sub-) markets for both, sampling- and sorting-based backtests.

Second, Figures~\ref{fig:risk_return_usa_europe}, \ref{fig:risk_return_usa_4plots} and \ref{fig:risk_return_europe_ex_ch_4plots} show that the variation in portfolio returns increases with the increasing ex-ante variance target and is largest for the highest variance group which contains both, the worst and the best performing portfolios. This shows that the construction method, i.e., the choice of weighting to form the volatility-targeting portfolios can heavily impact the final outcome. Further, also the spread in realized volatility levels increases with higher volatility targets, meaning that there is a larger estimation error for such portfolios.

Third, we observe that inference from the classical sorting-based method can be rather misleading in that resulting risk-return statistics can deviate substantially from the average sampling-based portfolio characteristics. Moreover, cross-checking the different sub-universes the deviation seems to be non-systematic. In order to evaluate the likelihood of finding the sorting-based risk-return vector within the cloud of sampling-based performance vectors we parameterize the risk-return distribution by fitting a log-concave model. We use the non-parametric model in~\cite{Cule10} and its implementation of {\tt LogConcDEAD} package~\cite{Culejss09} that fits on the data a log-concave density function which logarithm is a tenant function and its support is the convex hull of the cluster. The fitted density gives us the mode of the risk-return distribution (i.e.\ the most likely risk-return vector). Then, we compute the measures of a small rectangle centered on i) the mode, ii) the empirical cluster average and iii) the risk-return vecotors of the sorting based quintile portfolios. In all $10$ (sub-) universes, we take the volume of the small rectangle to be 1$\%$ of the volume of the support of the distribution. The plot in Figures~\ref{fig:dist_europe} and~\ref{fig:dist_usa} illustrate the PDF computed by {\tt LogConcDEAD} package and the quintile portfolios for each volatility level that corresponds to the plots in Figure~\ref{fig:risk_return_usa_europe}. Moreover, Table~\ref{tab:quintile_mode_prob} reports the measures of the rectangles. We notice that the measure of the area around the average is almost equal to that of the area around the mode, which is strong empirical evidence that the model in~\cite{Cule10} is a reasonable choice to evaluate the likelihood of finding a the performance vector of a particular backtest within each volatility cluster. Next, there is only one case ---Europe/4th volatility level--- where the rectangular area around the sorting-based quintile portfolio statistics achieves almost the same probability as the one around the average or the mode. In $4$ cases the performance vector of the quintile portfolio is even outside the convex hull of the cluster, i.e., the measure is zero. In another $4$ cases, the probability of the area around the quintile portfolio statistics is more than $10$ times smaller than that of the average or the mode and in one case it is $4$ times smaller. Consequently, according to the log-concave model in~\cite{Cule10}, in most of the cases it is unlikely to find the performance results obtained from the sorting-based quintile portfolio backtests among the results of the sampling-based backtests within a volatility cluster. Moreover, there is not any pattern with respect to the volatility levels that would indicate when it is likely or not for the quintile portfolio performance to be found.

By design, the sorting-based portfolio is a particular solution to the sampling-based set. Hence, our finding that the sorting-based backtest statistics can vary strongly from the cluster averages may be surprising. However, the reason as to why the sorting-based quintile portfolio statistics are sometimes far outlying the distributions of sampling-based portfolio statistics has a geometric explanation. Sorting-based quintile portfolios with equal weighting form centroidal points on faces of the simplex, i.e., boundary points of both the simplex and the ellipsoid component of a volatility level (see section~\ref{sec:algorithms_modeling}). Typically, the areas around these points are highly unlikely to be sampled as they often lie in a low-volume area of the sampling space, e.g., a corner. Hence, quintile portfolios should not be used to represent the group of portfolios with a specific volatility level.

Looking at the backtests on the different (sub-) universes we find that, overall, the low-volatility effect in terms of return gap between high-volatility and low-volatility is overestimated by the sorting-based approach, and particularly so in the European markets. Moreover, the variation in return among the five volatility clusters is more erratic for the sorting-based method than for the mode of the sampling-based portfolio return distributions.
Not surprising is the observation that stocks from small and cyclical companies are more volatile than those of large and defensive companies (irrespective of whether we consider sorting- or sampling-based portfolios).

A fourth insight, which however is not apparent visually, is that the risk-return relation of portfolios within volatility cluster are mostly negative. I.e., even for portfolios that, by construction, have the same ex-ante volatility, one observes that ex-post, those with lower volatility generally lead to a higher risk-adjusted return. This shows in terms of negative coefficients of the correlation between annualized returns and standard deviations of backtests within a volatility cluster, i.e., a sort of within-cluster low-volatility anomaly.

In the next step, we analyze the statistical significance in the difference of Sharpe ratios between the lowest and highest volatility portfolios. We employ a Sharpe ratio test \cite{bib:LedoitWolf2008} which accounts for time series structures in the data by employing heteroscedasticity and autocorrelation consistent (HAC) estimates of standard error. In the U.S., the null of equal Sharpe ratios among the sorting-based portfolios can not be rejected at the $5$\% significance level. This also holds for any of the analyzed sub-markets except for the group of small defensives stocks. In Europe, the results are more mixed. The overall market as well as the sub-markets small defensives and small cyclicals show a significant Sharpe ratio difference, while large companies on the two-sector groups do not. 
Using the sampling-based simulations, we then run the test on all pairwise combinations of high minus low volatility simulations and count the number of significant t-statistics. We get less than $10$\% significant test results in the U.S. and about $15$\% in Europe, which, at first glance, does not exactly speak for the existence of an anomaly. However, if we discriminate the pairwise differences at a level of zero, i.e., we divide the sample between positive and negative Sharpe ratio differences, we observe that
$98$\% and $99$\% of the Sharpe ratio differences are positive in the U.S. and European markets respectively and all significant test results come from the subset with positive Sharpe ratio differences. I.e., we only find significant positive differences and no significant negative differences. Figure~\ref{fig:density_of_pvalues_europe_ex_ch_and_usa} further shows the distribution of p-values of the pairwise test statistics for series with positive Sharpe ratio difference in green and negative Sharpe ratio difference in red for the U.S. (left plot) and European (right plot) markets. Details of the test statistics for the two universes and sub-markets are contained in Tables ~\ref{tab:sharpe_ratio_tests_usa} and ~\ref{tab:sharpe_ratio_tests_europe}. 

The picture on the more granular analysis of the sub-markets discriminating among sectors and company size shows a rather consistent (relative) pattern among the two universes. In both cases, the highest proportion of positive Sharpe ratios is found among small defensives with a proportion of significant p-values of almost 60\% in the U.S. (while the p-value obtained on the sorting-based backtests is barely significant) and almost 70\% in Europe. Further, both universes display the smallest proportion of positive Sharpe ratios among the group of large cyclicals.

We conclude that, while test statistics based on the sorting-based quintile portfolios do not provide an unambiguous picture, it is overwhelmingly clear from the descriptive analysis of sampling-based portfolios that low-volatility portfolios have delivered higher Sharpe ratios than high-volatility portfolios, irrespective of the weighting scheme used to form the portfolios.\\

\begin{figure}[t] 
\centering
\includegraphics[width=0.8\linewidth]{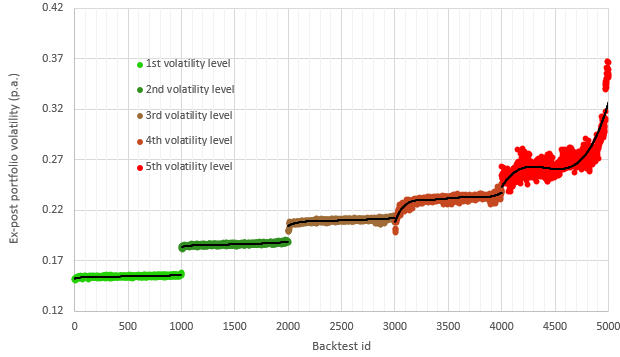}
\caption{Ex-post volatility of sampling-based backtests on the U.S. market where concatenation of random portfolios at the rebalancing dates is made according to a momentum sort. Points with the same color have the same in-sample volatility. Points per volatility cluster are overlayed by a polynomial regression curve. The curves are increasingly upward sloping with increasing volatility level meaning that backtests combined of the best past performers display (much) lower volatility than backtests composed of past losers.}
\label{fig:momentum_sorted_usa}
\end{figure}

As a final step, we repeat our analysis in a slightly modified version. Instead of randomly concatenating the sampling-based portfolio simulations at the rebalancing dates, we now order sampled portfolios within a volatility level by their last period return such that the currently best performing allocation matches with the previously best-performing portfolio, the currently second best with the previous second best, and so forth up to the current worst with the previous worst. The idea is that a concatenation based on a systematic ordering of the simulations according to a cross-sectional (in-sample) momentum criteria might produce a distinctive pattern out-of-sample. And indeed it does. Doing the concatenation over time as described, each of the $1000$ backtests per volatility level describes a dynamic strategy in the sense that the allocation changes every three months and the backtests are ranked by their quarterly performances. The first backtest concatenates all portfolios which ranked first in terms of quarterly returns. The pattern that emerges is a positive relation between in-sample rank and out-of-sample volatility. The backtest composed of the best performing portfolios has one of the lowest volatility over the entire nearly twenty years period while the backtest composed of the worst quarterly performers shows one of the highest volatility (always within a volatility-cluster). The pattern is very strong and shows up in all (sub-) markets. 
Portfolios with the same ex-ante volatility, when sorted by ex-ante momentum and concatenated over time, display a strong negative relation between the sorting-rank and ex-post volatility. The effect clearly shows in the increasingly upward sloping scatterplots in Figure~\ref{fig:momentum_sorted_usa} for the five color-coded volatility clusters in the U.S. market. In each volatility cluster, the backtests are ordered by rank (from best to worst) from left to right. Points are overlayed with a polynomial regression line to highlight the upward trend. Results for Europe are equivalent qualitatively, as can be seen in Figure~\ref{fig:mom_sorted_volas_europe_subsets}.
 which shows the behavior for the four European sub-markets.


The altered perspective provides further evidence in favor of the low-volatility anomaly. However, the economic rational and the causal nature of the very pronounced return-risk (rather than risk-return) relation is not clear to us at this stage. We leave the investigation to future research.

\begin{figure}[t] 
\centering
\includegraphics[width=1\linewidth]{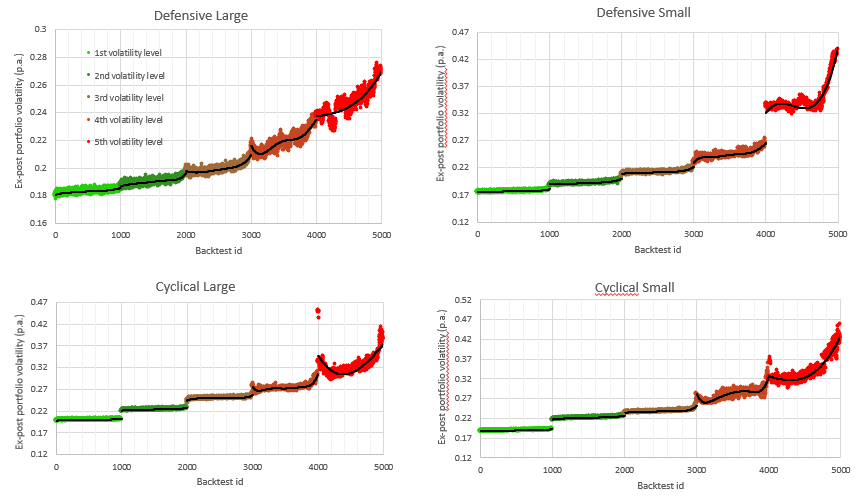}
\caption{Ex-post volatility of sampling-based backtests on the four subsets of the European market where concatenation of random portfolios at the rebalancing dates is made according to a momentum sort. Points with the same color have the same in-sample volatility. Points per volatility cluster are overlayed by a polynomial regression curve. The curves are increasingly upward sloping with increasing volatility level meaning that backtests combined of the best past performers display (much) lower volatility than backtests composed of past loser portfolios.}
\label{fig:mom_sorted_volas_europe_subsets}
\end{figure}

\begin{table}[t]
\centering
\footnotesize
\begin{tabular}{|c|c|ccccc|}
 \hline
 \multicolumn{2}{|c|}{USA} & \makecell{overall\\market} &
 \makecell{large\\defensives}&
 \makecell{small\\defensives} & \makecell{large\\cyclicals} &  \makecell{small\\cyclicals} \\
 \hline
 Sorting-based & p-values & 0.073 & 0.084 & 0.049 & 0.626 & 0.069 \\
 \hline
 \multirow{5}{*}{Sampling-based} & \makecell{Proportion of\\ significant p-values} & 0.114 & 0.254 & 0.567 & 0.004 & 0.119 \\
 \cdashline{2-7}
  & \makecell{Proportion of\\ positive SR} & 0.981 & 0.993 & 0.998 & 0.772 & 0.995 \\
  \cdashline{2-7}
  & \makecell{Proportion of significant\\ p-values on positive SR subset} & 0.116 & 0.256 & 0.569 & 0.006 & 0.120 \\
    \cdashline{2-7}
  & \makecell{Proportion of significant\\ p-values on negative SR subset} & 0 & 0 & 0 & 0 & 0 \\
 \hline
\end{tabular}
\caption{Sharpe ratio test results for the U.S. market.}
\label{tab:sharpe_ratio_tests_usa}
\end{table}

\begin{table}[!h]
\centering
\footnotesize
\begin{tabular}{|c|c|ccccc|}
 \hline
 \multicolumn{2}{|c|}{Europe} & \makecell{overall\\market} &
 \makecell{large\\defensives}&
 \makecell{small\\defensives} & \makecell{large\\cyclicals} &  \makecell{small\\cyclicals} \\
 \hline
 Sorting-based & p-values & 0.010 & 0.084 & 0.000 & 0.060 & 0.044 \\
 \hline
 \multirow{5}{*}{Sampling-based} & \makecell{Proportion of\\ significant p-values} & 0.152 & 0.140 & 0.678 & 0.019 & 0.028  \\
 \cdashline{2-7}
  & \makecell{Proportion of\\ positive SR} & 0.990 & 0.972 & 1 & 0.901 & 0.822 \\
  \cdashline{2-7}
  & \makecell{Proportion of significant\\ p-values on positive SR subset} & 0.153 & 0.144 & 0.678 & 0.021 & 0.033 \\
  \cdashline{2-7}
  & \makecell{Proportion of significant\\ p-values on negative SR subset} & 0 & 0 & 0 & 0 & 0.002 \\
 \hline
\end{tabular}
\caption{Sharpe ratio test results for the European market.}
\label{tab:sharpe_ratio_tests_europe}
\end{table}

\section{Discussion}\label{section:discussion}

We proposed a novel randomized geometric method that can be used to analyze stock markets for anomalies. The tools have allowed us to sample portfolios with a fixed variance and to investigate their performance characteristic over time, leading us to conclude that the low-volatility anomaly is very much present in the U.S. and Europe as well as in sub-markets discriminating for sector belongingness and company size. Further, we gained insights that the classical sorting-based approach forming quantile-portfolios badly represents the set of possible portfolios having a certain volatility and that therefore, one should be careful to base inference on the them.

The main limitation of our method is the "bottlenecks" that appear in the non-convex components (i.e.\ higher-dimensional analogs of the ones in Figure~\ref{fig:components_examples}) and do not allow us to perform computations with more complex data. 

Performance-wise our method can be enhanced with more efficient samplers such as Hamiltonian Monte-Carlo (HMC)~\cite{Byrne13} as well as by exploiting properties of the spherical patches. Also, general polytopes could be used instead of the simplex to increase the expressive power of our modeling. 

\begin{figure}[t] 
\centering
\includegraphics[width=0.9\linewidth]{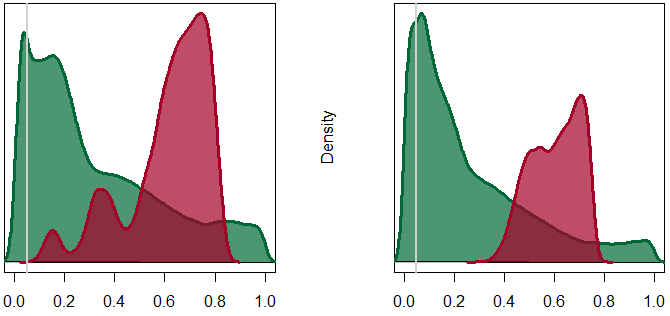}
\caption{Distributions of p-values of the Ledoit-Wolf Sharpe ratio test between low-volatility and high-volatility portfolios, grouped by positive (green) and negative (red) Sharpe ratio differences of low minus high volatility portfolio backtest in the U.S. (left) and European (right) market. The grey vertical lines mark the 5\% significance level.}
\label{fig:density_of_pvalues_europe_ex_ch_and_usa}
\end{figure}

\bibliography{references.bib}

\newpage
\appendix
\section{Appendix}
\subsection{Membership oracle for a connected component}\label{appnd:mem_oracle}

\begin{lem}
Given a body $\cK = \Sd \cap \Delta$ and a point $p\in\Sd$ to decide if $p\in\cK$ costs $O(d^2)$ operations.
\end{lem}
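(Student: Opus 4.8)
Since the point $p$ is assumed to lie on $\Sd$, deciding $p\in\cK=\Sd\cap\Delta$ is the same as deciding $p\in\Delta$; so the plan is simply to test membership in the simplex and to argue that this costs $O(d^2)$. First I would use the halfspace representation $\Delta=\{x\in\R^d\mid a_i^{T}x\le b_i,\ i\in[d+1]\}$ — a $d$-dimensional simplex has exactly $d+1$ facets (replace $\le$ by $<$ if one works with the open simplex, as in Figure~\ref{fig:components_examples}). The oracle evaluates the $d+1$ affine forms $a_i^{T}p-b_i$ and answers ``yes'' iff all of them have the required sign; each evaluation is one inner product of vectors in $\R^d$, i.e.\ $O(d)$ arithmetic operations, and there are $d+1$ of them, for a total of $O(d^2)$.

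If instead $\Delta$ is supplied by its $d+1$ vertices $v_1,\dots,v_{d+1}$ (the representation used in Section~\ref{sec:geometric_modeling} to build the $1$-skeleton), I would do an $O(d^3)$ one-time preprocessing — invert the $d\times d$ matrix with columns $v_1-v_{d+1},\dots,v_d-v_{d+1}$ — so that every subsequent call recovers the barycentric coordinates of $p$ with a single matrix–vector product ($O(d^2)$) and then checks their signs ($O(d)$). In both cases the per-call cost is $O(d^2)$, proving the claim.

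Honestly there is no real obstacle in the statement as written: it is a routine counting argument, and the only point that genuinely must be observed is that a simplex carries only $O(d)$ facets (resp.\ vertices), so no super-linear hidden cost creeps in, and that a change of representation is a one-time $O(d^3)$ overhead. The part that actually needs care is the refined query used in the main text — ``to which connected component $\cK_m$ of $\cK$ does $p$ belong'' — which I would still answer within the same $O(d^2)$ budget: precompute (once) the graph $G$ of Section~\ref{sec:geometric_modeling} together with its connected vertex sets $S_m$; given $p\in\cK$, let $v^\star\in\argmax_i p^{T}v_i$ and note that $v^\star$ survives in $G$, since $p=\sum_i\lambda_i v_i$ with $\lambda_i\ge 0$, $\sum_i\lambda_i=1$ forces $p^{T}v^\star\ge\sum_i\lambda_i\,p^{T}v_i=p^{T}p=1$, hence $\|v^\star\|\ge 1$; then return the component whose vertex set contains $v^\star$, validating it with a constant number of $O(d^2)$ segment–sphere intersection checks. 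The one step there that requires a short argument — and is the main obstacle for that variant — is that the component of $G$ containing $v^\star$ is indeed the one in bijection (per Section~\ref{sec:geometric_modeling}) with the component of $\cK$ containing $p$.
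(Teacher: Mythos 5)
Your proof of the lemma as literally stated is correct and more elementary than the paper's: since $p\in\Sd$ is given, membership in $\cK=\Sd\cap\Delta$ reduces to membership in the simplex, and evaluating the $d+1$ affine forms $a_i^Tp-b_i$ costs $O(d^2)$; that settles the claim. The paper's proof, however, actually establishes the stronger statement that the main text needs --- identifying \emph{which} connected component $\cK_m$ contains $p$ --- by shooting the ray from the origin through $p$ to a point $q\in\partial\Delta$ and finding a vertex $u$ with which $q$ has ``visual contact'' (the segment $[q,u]$ misses $\Sd$), then invoking the bijection with the components of the graph $G$. Your variant for that stronger question is genuinely different: you take $v^\star=\argmax_i p^Tv_i$ and note $p^Tv^\star\ge p^Tp=1$, hence $\|v^\star\|\ge 1$ and $v^\star$ survives in $G$. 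What you stop short of observing is that the same inequality already closes the gap you flag: for $x=(1-t)p+tv^\star$ with $t\in[0,1]$ one has $p^Tx=(1-t)+t\,p^Tv^\star\ge 1$, so $\|x\|\ge p^Tx\ge 1$ by Cauchy--Schwarz, and the entire segment $[p,v^\star]$ stays in $\Delta$ and outside the open unit ball. This is exactly the ``visual contact'' certificate the paper relies on, and it places $p$ and $v^\star$ in the same connected component of the part of $\Delta$ exterior to $B^d$, which is the (implicit) content of the bijection between components of $G$ and components of $\cK$; that bijection itself is asserted rather than proved in the paper, and both arguments depend on it --- you are simply more explicit about the dependence. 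Net comparison: your route proves the literal lemma immediately, and with the one-line segment computation above it yields the component oracle within the same $O(d^2)$ budget while avoiding the paper's ray--boundary intersection step (which also tacitly assumes the ray from the origin through $p$ exits $\Delta$ beyond $p$).
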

\begin{proof}
The Figure~\ref{fig:membership} describes the steps of the algorithm in 2D. This procedure can be generalized for any dimension $d$. In particular, we consider the half-line $l_{op}$ which starts from the origin and passes through $p$. Let $l_{op}$ to intersect $\partial \Delta$ at point $q$. Let also $u$ one vertex of $\Delta$ that $q$ has visual contact with, i.e., the segment defined by $q$ and $u$ does not intersect $\Sd$. Then, due to convexity, $p$ belongs to the component of $\cK$ that corresponds to the component of graph $G$ that contains $u$.

The computation of $q$ takes $O(d^2)$ operations and the detection of the vertex $u$ takes $O(d^2)$ operations too.
\end{proof} 

\begin{figure}[h] \centering
\includegraphics[width=0.85\linewidth]{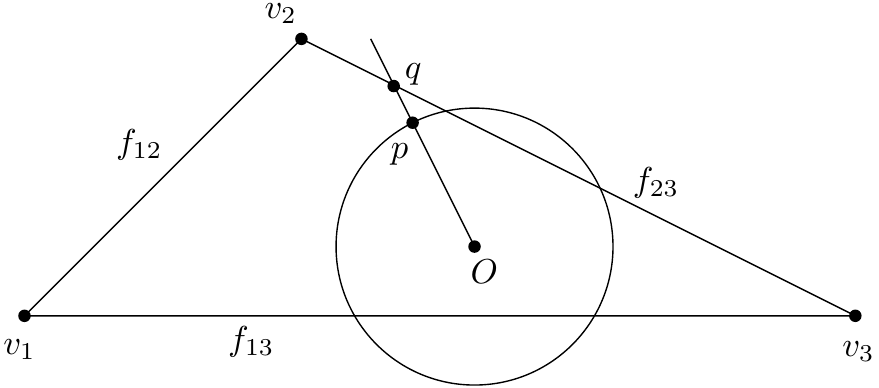}
\caption{Illustration of how the membership oracle works. In this example, there are two connected subsets of $\mathcal{S}_{1}$. We would like to answer in which component the point $p$ belongs. The half-line $l_{op}$ intersects $\partial P$ on $q$ which lies in the facet $f_{23}$. The point $q$ has visible contact with the vertex $v_2$, and thus, belongs to the connected component of $\mathcal{S}_1$ that corresponds to the connected component $\{v_1,v_2\}$ of graph $G$. \label{fig:membership}}
\end{figure}

\section{Sample uniformly with Great Cycle Walk}\label{appnd:gcw_uniform}

To sample uniformly a connected component of $\cK$, GCW samples uniformly a point from $\ell(\theta)$ in each step. 
In general, to sample uniformly from a parametric curve, we have to sample from the univariate probability density induced by the norm of the derivative of the curve. 
In our case, the curve is a (part of a) great circle and so GCW samples from
\begin{equation}
\phi(\theta) \propto \|\ell'(\theta)\|_2 = \| -p \sin \theta  + v \cos \theta \, \|_2 = 1 ,\ \theta \in [\theta^-,\theta^+],
\end{equation}
which is the uniform distribution over the segment $[\theta^-,\theta^+]$. 
Consequently, starting from $p$ the next Markov point is $p \cos \tilde{\theta}  + v \sin \tilde{\theta}$, for $\tilde{\theta}$ sampled uniformly from $[\theta^-,\theta^+]$.

\section{Proofs}

\subsection{Theorem~\ref{thm:GCW}} 
\begin{proof}
We build upon the results and the methodology in~\cite{Smith93}. 
Let $\mathcal{D}$ be a set of subsets of $K$ and $P(p,A)$, with $p\in K$ and $A\in\mathcal{D}$, denotes the one step transition probabilities (Markov kernel) of CGW algorithm. 
We will prove that $P$ is reversible with respect to $\pi$, that is
\begin{equation}\label{eq:reversibility}
    \int_A P(p,B)\pi(dp) = \int_B P(r,A)\pi(dr), \text{ for all }A,B\in\mathcal{D}
\end{equation}
Then, the stationarity of $\pi$ follows at once since we set $B = K$ in Equation~(\ref{eq:reversibility}) and we get,
\begin{equation}\label{eq:stationarity}
    \pi(A) = \int_{K} P(r,A) \pi(dr), \text{ for all } A\in\mathcal{D} ,
\end{equation}
which implies stationarity of $\pi$~\cite{vempala_survey_2005}. 

For $p,\ q\in K$ with $p^Tq = 0$ let,
\begin{equation}
    \Phi(p,q) = \{ \theta\in [-\pi, \pi]\ |\ p\cos\theta + q\sin\theta \in K \} .
\end{equation}
Let $f$ be the probability distribution function (PDF) of the constrained $\pi$ on a great cycle, then, let $f_{(p,q)}$ be the PDF on $[\theta^-, \theta^+]$ defined by
\begin{equation}
    f_{(p,q)}(\theta) = \left\{
\begin{array}{ll}
      \frac{f(p\cos\theta + q\sin\theta)}{\int_{\Phi(p,q)} f(p\cos t + q\sin t)\ dt} & \mbox{ if } \theta \in  \Phi(p,q),\\
      0, & \mbox{otherwise.}\\
\end{array} 
\right.
\end{equation}
Consider the random variables $Q$ and $U$,
 where $Q$ follows the uniform distribution over $\Sd \cap \mathcal{H}_p$, where $\mathcal{H}_p := \{ x\in\R^d\ |\ p^Tx = 0 \}$, $U$ follows the uniform distribution over $(0,1)$, and $F_{(p,q)}$ is the cumulative distribution function (CDF) of $f_{(p,q)}$. Then, the Markov kernel $P(p, A)$ is
\begin{equation}
    \begin{split}
         P(p, A) &= \Pr[F^{-1}_{(p, Q)}(U) \in A] \\
         & = \int_{\Sd \cap \mathcal{H}_p} \frac{1}{\vol(\Sd \cap \mathcal{H}_p)} \Pr[F^{-1}_{(p, q)}(U) \in A] dq \\
        & = \int_{\Sd \cap \mathcal{H}_p} \frac{1}{\vol(\Sd \cap \mathcal{H}_p)} \int_{\Phi(p,q)} \bm{1}_A(p\cos\theta + q\sin\theta) f_{(p,q)}(\theta) d\theta\ dq, 
    \end{split}
\end{equation}
where $\bm{1}_A(\cdot)$ is the indicator function of the set $A$ and $F^{-1}_{(p,q)}$ is the left-continuous inverse of CDF $F_{(p,q)}$.  
Let $W$ the conic bundle on $\mathcal{H}_p$ centered at $p$ with small spatial angle $d\phi$ that defines the geodesic trajectories starting from $p$ and have non-empty intersection with an infinitesimally small neighborhood $dr$ in $K$. Then, the Markov kernel becomes $P(p,A) = \int_A g(p,r) \pi(dr)$, where
\begin{equation}
    g(p,r) = \frac{\vol(W \cap \Sd \cap \mathcal{H}_p)}{\vol(\Sd \cap \mathcal{H}_p)\int_{\Phi_{pr}} p\cos t + q\sin t\ dt} ,
\end{equation}
where $\Phi_{pr} = \{ \theta\in [-\pi, \pi]\ |\ p\cos\theta + q\sin\theta \in \mathcal{C}(pr) \}$, for well-chosen $q\in\Sd \cap \mathcal{H}_p$ and $\mathcal{C}(pr)$ being the part of the great cycle defined by $p,\ r\in\Sd$ inside $K$. Notice that $g(p,r)$ is symmetric. 

Then, the left side in Equation~(\ref{eq:reversibility}) becomes $\int_A \int_B g(p,r)\pi(dr)\ \pi(dp)$ while the left side becomes $\int_B \int_A g(r,p) \pi(dp)\ \pi(dr)$. Reversibility then follows from Fubini's theorem and from the fact that $g(p,r)$ is symmetric.
\end{proof}

\subsection{Theorem~\ref{thm:gcw_convergence}} 
\begin{proof}
We build upon the results and the methodology in~\cite{Gryazina14}. 
Let $K_i$ a connected component of $K = \Delta \cap \Sd$.
The Theorem~2 in~\cite{Smith84} proves that if the transition density $r(q|p)$ exists and is symmetric as well as it is positive for all $p,q\in K_i$ then, the uniform distribution over $K_i$ is a unique stationary distribution, and it is achieved for any starting point $p\in K_i$.
To prove convergence to the uniform distribution we consider two cases: when $K_i$ is geodesically convex set and when $K_i$ is geodesically non-convex set.
In both cases, being at a Markov point $p^j$, the next Markov point $p^{j+1}$ is obtained with positive probability with less than $\rho + 1$ reflections.

For the first case, the existence of the probability density $r(q,p)$ for any $q,p\in K_i$ is implied when the transition probability from $p$ to an infinitesimally small neighborhood $dq$ is proportional to the volume of $dq$. 
Considering all possible piece-wise geodesic trajectories ---defined as pieces of great cycles--- that go from $p$ to $dq$, take those that perform $0 \leq k \leq \rho$ reflections. With this set of trajectories there is a conic bundle on the plane $H_p$ centered at $p$ with with small spatial angle $d\theta$ that define these trajectories. The area of reﬂection can be approximated as plain region, and thus, a reﬂection does not change the geometry of the bundle. Then,
\begin{equation}
    \Pr[\delta q|p] \propto \Pr[\delta\theta] \Pr[\delta L] , 
\end{equation}
where $\Pr[d\theta]$ is the probability of choosing the spatial angle (proportional to the volume of the base of the cone) and $\Pr[dL]$ is the probability
of choosing a certain trajectory length $L \in dL$. Thus, $\Pr[dq|p] \propto \vol(dq)$. For a geodesically convex $K_i$ the density $r(q|p)$ as all the points are reachable from any $p\in K_i$ with a trajectory with no reflections. The symmetry of the probability density function $r(q|p)$ follows from the uniformity of the distribution of the directions and reversibility of a billiard trajectory due to the reﬂection law: the angle of incidence equals the angle of reﬂection.

For the case of a geodesically non-convex $K_i$, the connectedness guarantees that starting from any point, we can reach a measurable neighborhood of any other point of $K_i$. Thus, there exists a piece-wise geodesic trajectory that connects any two points in $K_i$. Therefore, the transition probability density function $r(q|p) > 0$ for any $q,p\in K_i$. The symmetry of $r(q,p)$ holds using the same arguments as in the geodesically convex case. 
\end{proof}

\clearpage
\section{Tables}


\begin{table}[!h]
\centering
\footnotesize
\begin{tabular}{|c|cc|ccccc|}
 \hline
 \multicolumn{3}{|c|}{USA} & \makecell{overall\\market} & \makecell{small\\defensives} & \makecell{small\\cyclicals} & \makecell{large\\defensives} & \makecell{large\\cyclicals} \\
 \hline
  &  & Min Variance & 0.099 &  &  &  &  \\
 &  & CapW-BM & 0.085 &  &  &  &  \\
 \cline{2-8}
 \multirow{9}{*}{\begin{turn}{90}Annualized Returns\end{turn} } & \multirow{5}{*}{\begin{turn}{90}\makecell{Sampling-\\based}\end{turn} } & 1st volatility level & 0.097 & 0.087 & 0.109 & 0.067 & 0.101
 \\
 & & 2nd volatility level & 0.098 & 0.088 & 0.108 & 0.087 & 0.105  \\
 & & 3rd volatility level & 0.095 & 0.089 & 0.108 & 0.089 & 0.104 \\
 & & 4th volatility level & 0.095 & 0.079 & 0.105 & 0.085 & 0.096 \\
 & & 5th volatility level & 0.105 & 0.066 & 0.048	& 0.084 & 0.080 \\
 \cline{2-8}
  & \multirow{5}{*}{\begin{turn}{90}\makecell{Sorting-\\based}\end{turn} } & 1st volatility level & 0.094 & 0.079 & 0.113 & 0.074 & 0.099
 \\
 & & 2nd volatility level & 0.098 & 0.072 & 0.124 & 0.088 & 0.095 \\
 & & 3rd volatility level & 0.109 & 0.096 & 0.126 & 0.091 & 0.104 \\
 & & 4th volatility level & 0.072 & 0.080 & 0.031 & 0.064 & 0.084 \\
 & & 5th volatility level & 0.064 & 0.027 & 0.049 & 0.085 & 0.049 \\
 \hline
\end{tabular}
\caption{
\label{tab:annualized_returns_usa}}
\end{table}

\begin{table}[!h]
\centering
\footnotesize
\begin{tabular}{|c|cc|ccccc|}
 \hline
 \multicolumn{3}{|c|}{USA} & \makecell{overall\\market} & \makecell{small\\defensives} & \makecell{small\\cyclicals} & \makecell{large\\defensives} & \makecell{large\\cyclicals} \\
 \hline
  &  & Min Variance & 0.138 &  &  &  &  \\
 &  & CapW-BM & 0.191 &  &  &  &  \\
 \cline{2-8}
 \multirow{9}{*}{\begin{turn}{90} Annualized st.d.\end{turn} } & \multirow{5}{*}{\begin{turn}{90}\makecell{Sampling-\\based}\end{turn} } & 1st volatility level & 0.154 & 0.147 & 0.163 & 0.176 & 0.184 \\
 & & 2nd volatility level & 0.186 & 0.160 & 0.183 & 0.198 & 0.220 \\
 & & 3rd volatility level & 0.210 & 0.173 & 0.215 & 0.221 & 0.251 \\
 & & 4th volatility level & 0.230 & 0.202 & 0.269 & 0.240 & 0.291 \\
 & & 5th volatility leve & 0.268 & 0.259 & 0.363 & 0.294 & 0.375 \\
 \cline{2-8}
  & \multirow{5}{*}{\begin{turn}{90}\makecell{Sorting-\\based}\end{turn} } & 1st volatility level & 0.157 & 0.147 & 0.163 & 0.177 & 0.195 \\
 & & 2nd volatility level & 0.176 & 0.160 & 0.179 & 0.201 & 0.225 \\
 & & 3rd volatility level & 0.195 & 0.169 & 0.216 & 0.221 & 0.254 \\
 & & 4th volatility leve & 0.242 & 0.220 & 0.297 & 0.247 & 0.305 \\
 & & 5th volatility leve & 0.282 & 0.266 & 0.353 & 0.290 & 0.360 \\
 \hline
\end{tabular}
\caption{
\label{tab:annualized_standard_deviations_usa}}
\end{table}

\begin{table}[!h]
\centering
\footnotesize
\begin{tabular}{|c|cc|ccccc|}
 \hline
 \multicolumn{3}{|c|}{USA} & \makecell{overall\\market} & \makecell{small\\defensives} & \makecell{small\\cyclicals} & \makecell{large\\defensives} & \makecell{large\\cyclicals} \\
 \hline
  &  & Min Variance & 0.716 &  &  &  &  \\
 &  & CapW-BM & 0.445 &  &  &  &  \\
 \cline{2-8}
 \multirow{9}{*}{\begin{turn}{90} Annualized Sharpe Ratios\end{turn} } & \multirow{5}{*}{\begin{turn}{90}\makecell{Sampling-\\based}\end{turn} } & 1st volatility level & 0.630 & 0.595 & 0.673 & 0.380 & 0.548 \\
 & & 2nd volatility level & 0.525 & 0.549 & 0.590 & 0.441 & 0.479 \\
 & & 3rd volatility level & 0.455 & 0.514 & 0.504 & 0.405 & 0.415 \\
 & & 4th volatility level & 0.413 & 0.390 & 0.390 & 0.356 & 0.328 \\
 & & 5th volatility level & 0.391 & 0.256 & 0.133 & 0.287 & 0.214 \\
 \cline{2-8}
  & \multirow{5}{*}{\begin{turn}{90}\makecell{Sorting-\\based}\end{turn} } & 1st volatility level & 0.598 & 0.540 & 0.690 & 0.418 & 0.507 \\
 & & 2nd volatility level & 0.557 & 0.453 & 0.692 & 0.438 & 0.424 \\
 & & 3rd volatility level & 0.559 & 0.569 & 0.583 & 0.412 & 0.411 \\
 & & 4th volatility level & 0.299 & 0.364 & 0.106 & 0.260 & 0.275 \\
 & & 5th volatility level & 0.228 & 0.103 & 0.138 & 0.293 & 0.135 \\
 \hline
\end{tabular}
\caption{
\label{tab:annualized_sharpe_ratios_usa}}
\end{table}


\begin{table}[!h]
\centering
\footnotesize
\begin{tabular}{|c|cc|ccccc|}
 \hline
 \multicolumn{3}{|c|}{Europe} & \makecell{overall\\market} & \makecell{small\\defensives} & \makecell{small\\cyclicals} & \makecell{large\\defensives} & \makecell{large\\cyclicals} \\
 \hline
  &  & Min Variance & 0.087 &  &  &  &  \\
 &  & CapW-BM & 0.059 &  &  &  &  \\
 \cline{2-8}
 \multirow{9}{*}{\begin{turn}{90} Annualized Returns \end{turn} } & \multirow{5}{*}{\begin{turn}{90}\makecell{Sampling-\\based}\end{turn} } & 1st volatility level & 0.081 & 0.092 & 0.115 & 0.062 & 0.060 \\
 & & 2nd volatility level & 0.079 & 0.079 & 0.097 & 0.073 & 0.078 \\
 & & 3rd volatility level & 0.074 & 0.073 & 0.087 & 0.063 & 0.072 \\
 & & 4th volatility level & 0.067 & 0.075 & 0.063 & 0.059 & 0.078 \\
 & & 5th volatility level & 0.065 & 0.070 & 0.033 & 0.053 & 0.060 \\
 \cline{2-8}
  & \multirow{5}{*}{\begin{turn}{90}\makecell{Sorting-\\based}\end{turn} } & 1st volatility level & 0.098 & 0.094 & 0.135 & 0.064 & 0.087 \\
  & & 2nd volatility level & 0.092 & 0.081 & 0.109 & 0.088 & 0.074\\
 & & 3rd volatility level & 0.075 & 0.073 & 0.089 & 0.059 & 0.060\\
 & & 4th volatility leve & 0.076 & 0.050 & 0.109 & 0.060 & 0.063\\
 & & 5th volatility leve & 0.018 & 0.046 & -0.042 & 0.005 & 0.025\\
 \hline
\end{tabular}
\caption{
\label{tab:annualized_returns_europe}}
\end{table}

\begin{table}[!h]
\centering
\footnotesize
\begin{tabular}{|c|cc|ccccc|}
 \hline
 \multicolumn{3}{|c|}{Europe} & \makecell{overall\\market} & \makecell{small\\defensives} & \makecell{small\\cyclicals} & \makecell{large\\defensives} & \makecell{large\\cyclicals} \\
 \hline
 &  & Min Variance & 0.159 &  &  &  &  \\
 &  & CapW-BM & 0.219 &  &  &  &  \\
 \cline{2-8}
 \multirow{9}{*}{\begin{turn}{90} Annualized st.d. \end{turn} } & \multirow{5}{*}{\begin{turn}{90}\makecell{Sampling-\\based}\end{turn} } & 1st volatility level &  0.170 & 0.183 & 0.177 & 0.200 & 0.189 \\
 & & 2nd volatility level & 0.197 & 0.190 & 0.192 & 0.224 & 0.222 \\
 & & 3rd volatility level & 0.222 & 0.199 & 0.212 & 0.250 & 0.238 \\
 & & 4th volatility level & 0.242 & 0.219 & 0.244 & 0.274 & 0.281 \\
 & & 5th volatility level & 0.278 & 0.248 & 0.347 & 0.326 & 0.344 \\
 \cline{2-8}
  & \multirow{5}{*}{\begin{turn}{90}\makecell{Sorting-\\based}\end{turn} } & 1st volatility level & 0.172 & 0.188 & 0.179 & 0.197 & 0.192 \\
 & & 2nd volatility level & 0.186 & 0.183 & 0.192 & 0.226 & 0.222 \\
 & & 3rd volatility level & 0.205 & 0.197 & 0.211 & 0.253 & 0.240 \\
 & & 4th volatility level & 0.235 & 0.219 & 0.252 & 0.284 & 0.270 \\
 & & 5th volatility level & 0.281 & 0.246 & 0.347 & 0.330 & 0.328 \\
 \hline
\end{tabular}
\caption{
\label{tab:annualized_standard_deviations_europe}}
\end{table}

\begin{table}[!h]
\centering
\footnotesize
\begin{tabular}{|c|cc|ccccc|}
 \hline
 \multicolumn{3}{|c|}{Europe} & \makecell{overall\\market} & \makecell{small\\defensives} & \makecell{small\\cyclicals} & \makecell{large\\defensives} & \makecell{large\\cyclicals} \\
 \hline
  &  & Min Variance & 0.543 &  &  &  &  \\
 &  & CapW-BM & 0.268 &  &  &  &  \\
 \cline{2-8}
 \multirow{9}{*}{\begin{turn}{90} Annualized Sharpe Ratios\end{turn} } & \multirow{5}{*}{\begin{turn}{90}\makecell{Sampling-\\based}\end{turn} } & 1st volatility level & 0.473 & 0.503 & 0.651 & 0.308 & 0.316 \\
 & & 2nd volatility level & 0.398 & 0.417 & 0.505 & 0.326 & 0.351 \\
 & & 3rd volatility level & 0.333 & 0.367 & 0.410 & 0.253 & 0.302 \\
 & & 4th volatility level & 0.279 & 0.344 & 0.258 & 0.217 & 0.275 \\
 & & 5th volatility level & 0.233 & 0.284 & 0.096 & 0.163 & 0.174 \\
 \cline{2-8}
  & \multirow{5}{*}{\begin{turn}{90}\makecell{Sorting-\\based}\end{turn} } & 1st volatility level & 0.573 & 0.503 & 0.755 & 0.327 & 0.456 \\
 & & 2nd volatility level & 0.496 & 0.442 & 0.568 & 0.389 & 0.335 \\
 & & 3rd volatility level & 0.368 & 0.371 & 0.422 & 0.234 & 0.249 \\
 & & 4th volatility level & 0.325 & 0.229 & 0.430 & 0.213 & 0.234 \\
 & & 5th volatility level & 0.065 & 0.187 & -0.122 & 0.016 & 0.075 \\
 \hline
\end{tabular}
\caption{
\label{tab:annualized_sharpe_ratios_europe}}
\end{table}

\begin{table}[t]
\centering
\footnotesize
\begin{tabular}{|c|ccccc|}\hline
 \multicolumn{6}{|c|}{USA} \\
 \hline
 Volatility level &  1st   &   2nd    &  3rd    &  4th  & 5th \\
\hline
\makecell{Model-based\\ mode} &  0.051   &    0.042   &  0.071   &  0.057  & 0.061 \\
\hdashline
\makecell{Sampling-based\\average} &  0.051   &    0.041   &   0.071   &  0.056  & 0.060 \\
\hdashline
\makecell{Sorting-based\\(quintile portfolios)} &  0.002   &   0.010    & 0.004  &  0  & 0.001 \\
\hline\hline
 \multicolumn{6}{|c|}{Europe} \\
 \hline
 Volatility level &  1st   &   2nd    &  3rd    &  4th  & 5th \\
\hline
\makecell{Model-based\\ mode} &  0.049   &    0.045   &   0.054   &   0.061 & 0.074 \\
\hdashline
\makecell{Sampling-based\\average} &  0.048   &    0.044   &     0.053 &  0.056  &  0.073 \\
\hdashline
\makecell{Sorting-based\\(quintile portfolios)} & 0    &    0   &   0   &   0.051 & 0.007 \\
\hline
\end{tabular}
\caption{For each volatility cluster in Figure~\ref{fig:risk_return_usa_europe} we fit a log-concave distribution using the non-parametric model in~\cite{Cule10}. We report the probability ---w.r.t.\ the log-concave measure we obtain from {\tt LogConcDEAD} package~\cite{Culejss09}--- of a small rectangle centered on the model-based mode, the sampling-based average and the sorting-based quintile portfolio. In all cases, the volume of the rectangle is $1\%$ of the volume of the support of the distribution obtained by the log-concave model. See Fig.~\ref{fig:dist_usa},~\ref{fig:dist_europe} (Appendix) for a visualization. \label{tab:quintile_mode_prob}}
\end{table}


\clearpage
\section{Figures}\label{appndx:figures}

\begin{figure}[!h] 
\centering
\includegraphics[width=0.32\linewidth]{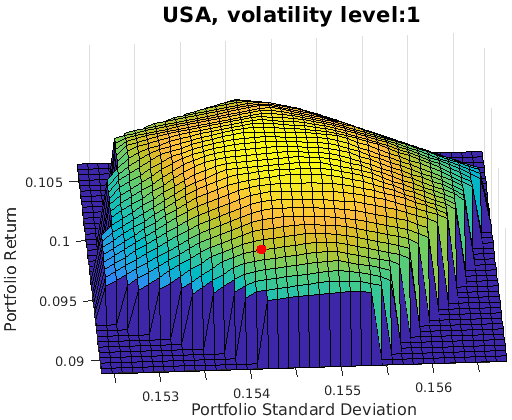}
\includegraphics[width=0.32\linewidth]{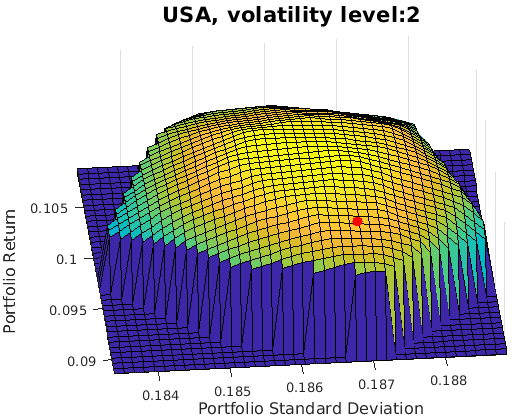}
\includegraphics[width=0.32\linewidth]{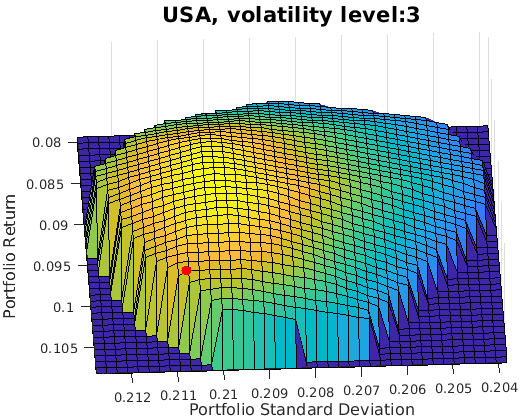}\\
\vspace{0.4cm}
\includegraphics[width=0.35\linewidth]{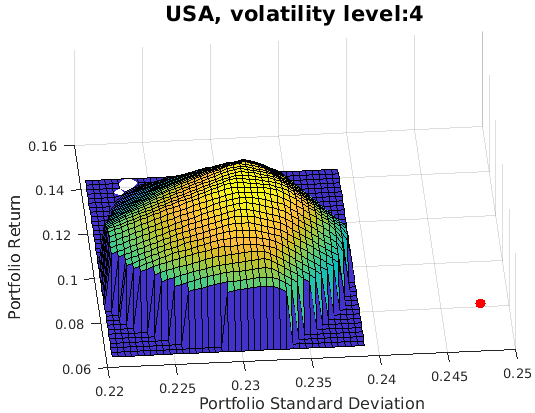}
\includegraphics[width=0.35\linewidth]{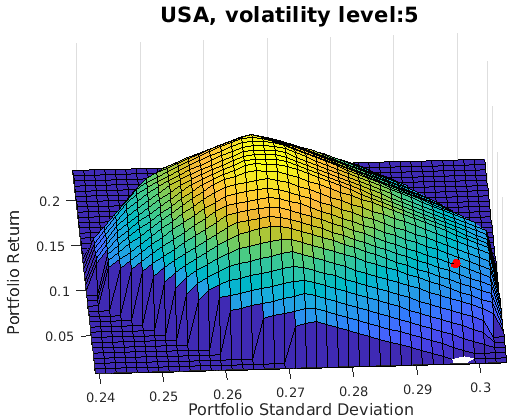}
\caption{For each volatility cluster in the left plot in Figure~\ref{fig:risk_return_usa_europe} ---U.S. companies--- we fit a log-concave distribution using the non-parametric model in~\cite{Cule10}. With a red dot we report the sorting-based quintile portfolio.}
\label{fig:dist_usa}
\end{figure}

\begin{figure}[!h] 
\centering
\includegraphics[width=0.32\linewidth]{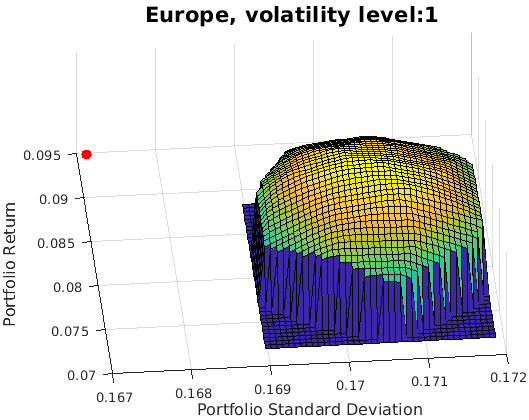}
\includegraphics[width=0.32\linewidth]{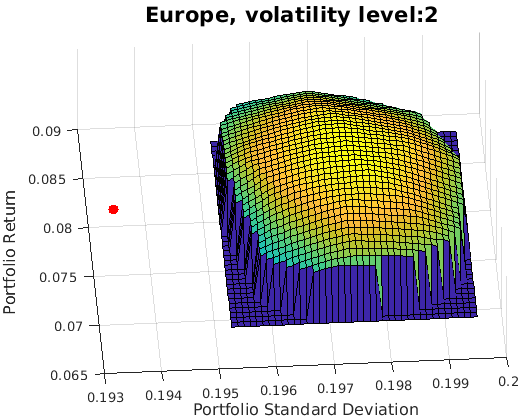}
\includegraphics[width=0.32\linewidth]{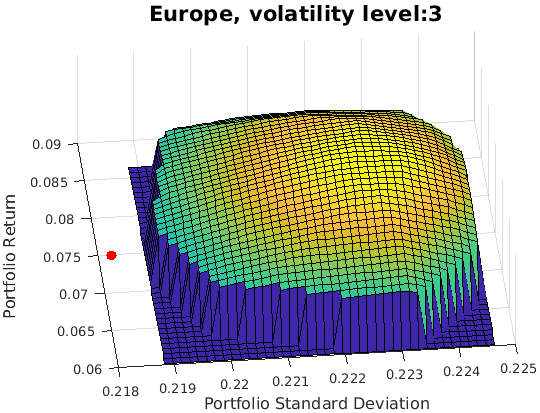}\\
\vspace{0.4cm}
\includegraphics[width=0.35\linewidth]{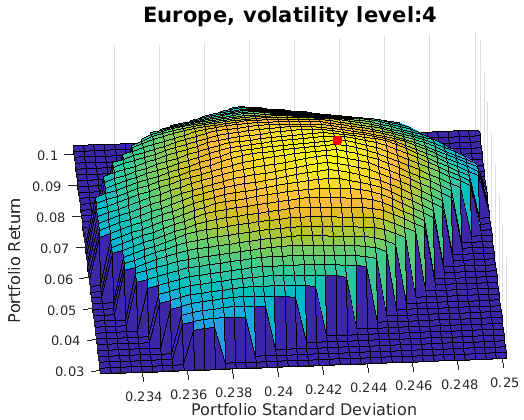}
\includegraphics[width=0.35\linewidth]{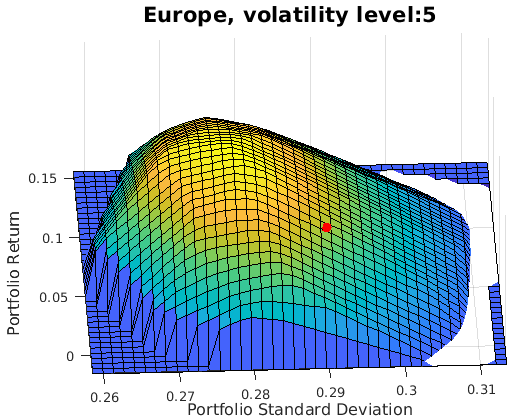}
\caption{For each volatility cluster in the right plot in Figure~\ref{fig:risk_return_usa_europe} ---European companies--- we fit a log-concave distribution using the non-parametric model in~\cite{Cule10}. With a red dot we report the sorting-based quintile portfolio.}
\label{fig:dist_europe}
\end{figure}

\if 0
\clearpage
\section{Copulas}

\begin{figure}[!h] 
\centering
\includegraphics[width=0.32\linewidth]{./figures_copulas/cop_usa_vola_1.png}
\includegraphics[width=0.32\linewidth]{./figures_copulas/cop_usa_vola_2.png}
\includegraphics[width=0.32\linewidth]{./figures_copulas/cop_usa_vola_3.png}\\
\vspace{0.4cm}
\includegraphics[width=0.32\linewidth]{./figures_copulas/cop_usa_vola_4.png}
\includegraphics[width=0.32\linewidth]{./figures_copulas/cop_usa_vola_5.png}
\includegraphics[width=0.32\linewidth]{./figures_copulas/cop_usa_concat.png}
\caption{For each volatility cluster in the left plot in Figure~\ref{fig:risk_return_europe_ex_ch_and_usa} ---U.S.\ companies--- we estimate the copula between risk and return.}
\label{fig:cop_usa}
\end{figure}

\begin{figure}[!h] 
\centering
\includegraphics[width=0.32\linewidth]{./figures_copulas/cop_europe_vola_1.png}
\includegraphics[width=0.32\linewidth]{./figures_copulas/cop_europe_vola_2.png}
\includegraphics[width=0.32\linewidth]{./figures_copulas/cop_europe_vola_3.png}\\
\vspace{0.4cm}
\includegraphics[width=0.32\linewidth]{./figures_copulas/cop_europe_vola_4.png}
\includegraphics[width=0.32\linewidth]{./figures_copulas/cop_europe_vola_5.png}
\includegraphics[width=0.32\linewidth]{./figures_copulas/cop_europe_concat.png}
\caption{For each volatility cluster in the right plot in Figure~\ref{fig:risk_return_europe_ex_ch_and_usa} ---European companies--- we estimate the copula between risk and return.}
\label{fig:cop_europe}
\end{figure}
\fi

\end{document}